\newtheorem{Theorem}{Theorem}
\newtheorem{Corollary}{Corollary}
\newtheorem{Proposition}{Proposition}
\newtheorem{Lemma}{Lemma}
\newtheorem{Definition}{Definition}
\newtheorem{Observation}{Observation}
    \def\ang#1{\mbox{$\langle #1 \rangle$}}
    \def\aang#1{\mbox{$\langle\!\langle #1 \rangle\!\rangle$}}
    \def\aangBig#1{\mbox{$\Big\langle\!\!\Big\langle #1 \Big\rangle\!\!\Big\rangle$}}
    \def\aangBigg#1{\mbox{$\Bigg\langle\!\!\!\Bigg\langle #1 \Bigg\rangle\!\!\!\Bigg\rangle$}}
    \def\Complex{{\rm\rule[.23ex]{.03em}{1.1ex}\kern-.3em{C}}}
    \newcommand{\be}{\begin{equation}} \newcommand{\ee}{\end{equation}}
    \newcommand{\bea}{\begin{eqnarray}} \newcommand{\eea}{\end{eqnarray}}
    \newcommand{\benum}{\begin{enumerate}} \newcommand{\eenum}{\end{enumerate}}
    \newcommand{\qb}{{\bf b}}
    \newcommand{\qu}{{\bf u}}
    \newcommand{\qv}{{\bf v}}
    \newcommand{\qw}{{\bf w}}
    \newcommand{\qx}{{\bf x}}
    \newcommand{\qy}{{\bf y}}
    \newcommand{\qz}{{\bf z}}
    \newcommand{\qA}{{\bf A}}
    \newcommand{\qB}{{\bf B}}
    \newcommand{\qI}{{\bf I}}
    \newcommand{\qQ}{{\bf Q}}
    \newcommand{\qR}{{\bf R}}
    \newcommand{\qT}{{\bf T}}
    \newcommand{\qU}{{\bf U}}
    \newcommand{\qV}{{\bf V}}
    \newcommand{\qW}{{\bf W}}
    \newcommand{\qX}{{\bf X}}
    \newcommand{\qzero}{{\bf 0}}
    \newcommand{\qone}{{\bf 1}}
    \newcommand{\qSigma}{{\boldsymbol \Sigma}}
    \newcommand{\qGamma}{{\boldsymbol \Gamma}}
    \newcommand{\tQ}{{\tilde{Q}}}
    \newcommand{\tm}{{\tilde{m}}}
    \newcommand{\tq}{{\tilde{q}}}
    \newcommand{\tqv}{{\tilde{\qv}}}
    \newcommand{\tqw}{{\tilde{\qw}}}
    \newcommand{\tqx}{{\tilde{\qx}}}
    \newcommand{\tqy}{{\tilde{\qy}}}
    \newcommand{\tqA}{{\tilde{\qA}}}
    \newcommand{\tqV}{{\tilde{\qV}}}
    \newcommand{\tqQ}{{\tilde{\qQ}}}
    \newcommand{\hQ}{{\hat{Q}}}
     \newcommand{\hm}{{\hat{m}}}
    \newcommand{\hchi}{{\hat{\chi}}}
    \newcommand{\oM}{{\overline{M}}}
    \newcommand{\oN}{{\overline{N}}}
    \newcommand{\bbC}{{\mathbb C}}
    \newcommand{\calN}{{\mathcal N}}
    \newcommand{\calO}{{\mathcal O}}
    \newcommand{\calR}{{\mathcal R}}
    \newcommand{\tr}{{\sf tr}}
    \newcommand{\Extr}{\operatornamewithlimits{\sf Extr}}
    \newcommand{\mse}{{\sf mse}}
    \newcommand{\vect}{{\sf vec}}
    \newcommand*{\argmin}{\operatornamewithlimits{argmin}\limits}
   \newcommand{\dRe}{{\rm Re}}
   \newcommand{\dIm}{{\rm Im}}
    \newcommand{\sfQ}{{\sf Q}}
    \newcommand{\rmd}{{\rm d}}
    \newcommand{\rmD}{{\rm D}}
    \newcommand{\sfj}{{\sf j}}
    \newcommand{\sfc}{{\sf c}}
    \newcommand{\sfr}{{\sf r}}
    \newcommand{\sfZ}{{\sf Z}}
    \newcommand{\aag}{\acute{g}}
\begin{document}

\title{On Sparse Vector Recovery Performance in Structurally Orthogonal Matrices via LASSO}

\author{Chao-Kai Wen$\stackrel{\ddag}{,}$ 
Jun Zhang$\stackrel{\sharp}{,}$
Kai-Kit Wong$\stackrel{\S}{,}$ 
Jung-Chieh Chen$\stackrel{\Diamond}{,}$~and~Chau Yuen$^\sharp$
\thanks{$^\ddag$C. Wen is with the Institute of Communications Engineering, National Sun Yat-sen University, Kaohsiung 804, Taiwan. E-mail: ${\rm chaokai.wen@mail.nsysu.edu.tw}$.}
\thanks{$^\sharp$J. Zhang and C. Yuen are with Engineering Product Development, Singapore University of Technology and Design, Singapore. E-mail: ${\rm\{zhang\_jun,yuenchau\}@sutd.edu.sg}$.}
\thanks{$^\S$K. Wong is with the Department of Electronic and Electrical Engineering, University College London, London, United Kingdom. E-mail: ${\rm kai\mbox{-}kit.wong@ucl.ac.uk}$.}
\thanks{$^\Diamond$J.-C. Chen (Corresponding author) is with the Department of Optoelectronics and Communication Engineering, National Kaohsiung Normal University, Kaohsiung 802, Taiwan. E-mail: ${\rm jcchen@nknucc.nknu.edu.tw}$.} }


\maketitle

\begin{abstract}
In this paper, we consider a compressed sensing problem of reconstructing a sparse signal from an undersampled set of noisy linear measurements. The regularized least squares or least absolute shrinkage and selection operator (LASSO) formulation is used for signal estimation. The measurement matrix is assumed to be constructed by concatenating several randomly orthogonal bases, referred to as structurally orthogonal matrices. Such measurement matrix is highly relevant to large-scale compressive sensing applications because it facilitates fast computation and also supports parallel processing. Using the replica method from statistical physics, we derive the mean-squared-error (MSE) formula of reconstruction over the structurally orthogonal matrix in the large-system regime. Extensive numerical experiments are provided to verify the analytical result. We then use the analytical result to study the MSE behaviors of LASSO over the structurally orthogonal matrix, with a particular focus on performance comparisons to matrices with independent and identically distributed (i.i.d.) Gaussian entries. We demonstrate that the structurally orthogonal matrices are at least as well performed as their i.i.d.~Gaussian counterparts, and therefore the use of structurally orthogonal matrices is highly motivated in practical applications.
\end{abstract}

\begin{IEEEkeywords}
Compressed sensing, LASSO, orthogonal measurement matrix, the replica method.
\end{IEEEkeywords}

\section*{\sc I. Introduction}
Sparse signal reconstruction problems appear in many engineering fields. In most applications, signals are often measured from an undersampled set of noisy linear transformations. Typically, the problem of interest is the reconstruction of a \emph{sparse} signal $\qx^{0} \in \bbC^{\oN}$ from a set of $\oM (\leq \oN)$ noisy measurements $\qy \in \bbC^{\oM}$ which is given by
\begin{equation}\label{eq:sysModel}
\qy= \qA\qx^{0} + \sigma_0\qw,
\end{equation}
where $\qA \in \bbC^{\oM \times \oN}$ is the measurement matrix, and $\sigma_0 \qw \in\bbC^{\oM}$ is the noise vector with $\sigma_0$ representing the noise magnitude. This problem has arisen in many areas, such as signal processing, communications theory, information science, and statistics, and is widely known as {\em compressive sensing} \cite{Candes-05IT,Donoho-06IT}.

In the past few years, many recovery algorithms have been proposed, see \cite{Tropp-10ProcIEEE,Hayashi-13IEICE} for a recent exhaustive list of the algorithms. One popular suboptimal and low-complexity estimator is $\ell_1$-regularized least-squares (LS), a.k.a.~least absolute shrinkage and selection operator (LASSO) \cite{Tibshirani-96JRSS}, which seeks $\qx^{0}$ by
\begin{equation} \label{eq:RLS}
   \hat{\qx} =  \argmin_{\qx \in \bbC^N} \left\{ \frac{1}{\lambda}\| \qy - \qA\qx\|_2^2 + \|\qx\|_1 \right\}.
\end{equation}
In (\ref{eq:RLS}), $\lambda > 0 $ is a design parameter, and the \emph{complex}\footnote{In the real-valued setting, the $\ell_1$-norm is defined as $\|\qx\|_1\triangleq\sum_{n} |x_n|$, which is different from the complex $\ell_1$-norm. A simple extension of LASSO to the complex setting is to consider the complex signal and measurements as a $2\oN$-dimensional real-valued signal and $2\oM$-dimensional real-valued measurements, respectively. However, several papers (e.g., \cite{Maleki-13IT} and the references therein) have shown that LASSO based on the complex $\ell_1$-norm is superior to the simple real-valued extension when the real and imaginary components of the signals tend to either zero or nonzero simultaneously. Therefore, we consider LASSO using the the complex $\ell_1$-norm definition of (\ref{eq:RLS}) rather than the simple real-valued extension of LASSO.} $\ell_1$-norm is defined as
\begin{equation}
\|\qx\|_1\triangleq \sum_{i=1}^{\oN} |x_i| = \sum_{i=1}^{\oN} \sqrt{ (\dRe\{x_i\})^2 + (\dIm\{x_i\})^2}.
\end{equation}
The optimization problem of (\ref{eq:RLS}) is convex, and there are various fast and efficient solvers proposed. For example, the proximal gradient method in \cite[Section 7.1]{Parikh-14FTinOpt} resolves (\ref{eq:RLS}) by iteratively performing
\begin{equation} \label{eq:proxGrad}
    \hat{\qx}^{t+1} := \eta\Big( \hat{\qx}^{t} - \varsigma^{t} \qA^H(\qA\hat{\qx}^{t}-\qy), \varsigma^{t} \Big),
\end{equation}
where $t$ is the iteration counter, $\varsigma^{t} > 0 $ is the chosen step size, and
\begin{equation}
    \eta(x, \varsigma) \triangleq \frac{x}{|x|} \left( |x| - \varsigma \right)_{+}
\end{equation}
is a soft-thresholding function in which $(a)_{+} = a$ if $a > 0$ and is $0$ otherwise.

Evaluating $\qA^H(\qA\hat{\qx}^{t}-\qy)$ requires one matrix-vector multiplication by $\qA$ and another by $\qA^H$, plus a (negligible) vector addition. The complexity for evaluating the soft-thresholding function $\eta$ is negligible. This kind of iterative thresholding algorithm requires few computations per-iteration, and therefore enables the application of LASSO in large-scale problems.

Much of the theoretical work on (\ref{eq:RLS}) has focused on studying how aggressively a sparse signal can be undersampled while still guaranteeing perfect signal recovery. The existing results include those based on the restricted isometry property (RIP) \cite{Candes-05IT,Candes-08}, polyhedral geometry \cite{Donoho-05PNAS,Donoho-10ProcIEEE}, message passing \cite{Donoho-09PNAS}, and the replica method \cite{Kabashima-09JSM,Ganguli-10PRL,Rangan-12TIT}. Although RIP provides sufficient conditions for sparse signal reconstruction, the results provided by RIP analysis are often conservative in practice. In contrast, using combinational geometry, message passing, or the replica method, it is possible to compute the exact necessary and sufficient condition for measuring the sparsity-undersampling tradeoff performance of (\ref{eq:RLS}) in the limit $\oN \to \infty$. However, the theoretical work largely focused on the case of having a measurement matrix $\qA$ with independent and identically distributed (i.i.d.) entries. A natural  question would be ``{\em how does the choice of the measurement matrix affect the typical sparsity-undersampling tradeoff performance?}''.

There are strong reasons to consider different types of measurement matrix. Although the proximal gradient method performs efficiently in systems of medium size,
the implementation of (\ref{eq:proxGrad}) will become prohibitively complex if the signal size is very large. This is not only because performing
(\ref{eq:proxGrad}) requires matrix-vector multiplications up to the order of $O(\oM\oN)$ but it also requires a lot of memory to store the measurement matrix.
There is strong desire to consider special forms of measurement matrix permitting faster multiplication process and requiring less memory. One such example is the
randomly generated discrete Fourier transform (DFT) matrices or discrete cosine transform (DCT) matrices \cite{Do-08ICASSP,Barbier-13ArXiv,Javanmard-12ISIT}. Using
DFT as the measurement matrix, fast Fourier transform (FFT) can be used to perform the matrix multiplications at complexity of $O(\oN\log_2\oN)$ and the
measurement matrix is \emph{not} required to be stored. The entries of a DFT matrix are however not i.i.d..

In the noiseless setup (i.e., $\sigma =0$), it has been revealed that the measurement matrix enjoys the so-called universality property; that is, measurement matrices with i.i.d.~ensembles and rotationally invariant (or row-orthonormal) ensembles exhibit the same recovery capability (or the phase transition) \cite{Donoho-05PNAS,Donoho-09MPS,Bayati-12ArXiv,Kabashima-09JSM}. The universality phenomenon is further extended to the measurement matrices which are constructed by concatenating several randomly square orthonormal matrices \cite{Kabashima-12JSM}.

Although the universality phenomenon of LASSO is known for a broad class of measurement matrices in the noiseless setup, little progress has been made in the
practical \emph{noisy} setting. In the noisy setting, perfect recovery is rare so we are interested in the (average) mean squared error (MSE) of reconstruction
defined by $\oN^{-1} \aang{ \| \qx^{0} - \hat{\qx} \|_2^2 }_{\qw,\qx^{0}}$, where $\aang{\cdot}_{\qw,\qx^{0}}$ denotes the average with respect to $\qw$ and
$\qx^{0}$. In \cite{Tulino-13IT}, an analytical expression for MSE in LASSO reconstruction was obtained when the measurement matrix $\qA$ is a row-orthonormal
matrix generated uniformly at random. Nevertheless, the emphasis of \cite{Tulino-13IT} was in support recovery rather than the MSE of reconstruction. It was not
until very recently that the superiority of row-orthonormal measurement matrices over their i.i.d.~Gaussian counterparts for the noisy sparse recovery problem was
revealed in \cite{Vehkapera-arXiv13}.\footnote{In fact, the significance of orthogonal matrices under other problems (e.g., code-division multiple-access (CDMA)
and multiple-input multiple-output (MIMO) systems) has been pointed out much earlier in \cite{Takeda-06EPL,Hatabu-09PRE,Kitagawa-10CN}.} This characteristics is in
contrast to the noiseless setup mentioned above. Meanwhile, the authors of \cite{Oymak-14ISIT} supported the similar argument and further argued that one can still
claim universality in the noisy setup if we restrict the measurement matrices to similar row-orthonormal type. These arguments showed that the choice of
measurement matrices does have an impact in the MSE of reconstruction when noise is present. Despite these previous works, the study of LASSO in the case of
orthonormal measurement matrices remains incomplete, for the following reasons.

\begin{figure}
\begin{center}
\resizebox{4in}{!}{%
\includegraphics*{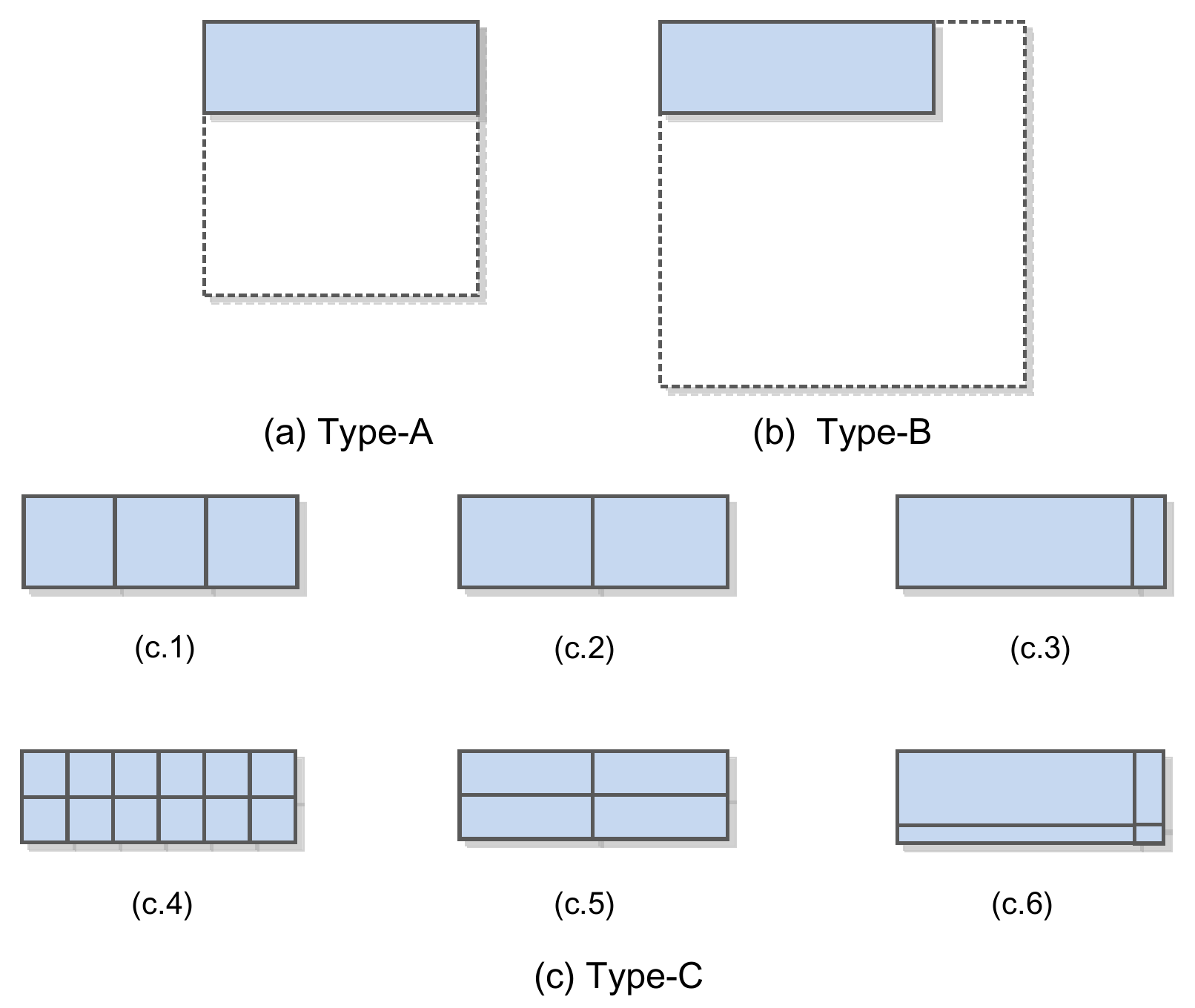} }%
\caption{Examples of structurally random orthogonal matrices.}\label{fig:M-Orthogonal}
\end{center}
\end{figure}

First, in many applications of interest, the measurement matrices are constructed by selecting a set of columns and rows from a standard orthonormal matrix as depicted in Figure \ref{fig:M-Orthogonal}(b), which we call it ``Type-B'' matrix. Let a standard orthonormal matrix be an $N \times N$ unitary matrix. Then we have $\oM < \oN < N$ in a Type-B matrix. Note that it is also possible to obtain row-orthonormal matrices by selecting a set of rows from orthonormal matrices rather than selecting a set of rows \emph{and} columns, and in this case, we refer to such row-orthonormal matrix as ``Type-A'' matrix, see Figure \ref{fig:M-Orthogonal}(a). One prominent application of Type-B matrices is the sparse channel estimator in orthogonal frequency-division multiplexing (OFDM) systems \cite{Bajwa-10Proc}. In that case, $\qx^{0}$ represents a time-domain channel vector and $\qA$ is a partial DFT matrix. Another popular application arises in compressive sensing imaging, where randomly sampling a DFT matrix from its row is common practice \cite{Do-08ICASSP}. Nonetheless, \emph{additional} selection in columns is often needed because the signal size would be smaller than the size of available FFT operators\footnote{Though FFT is generally applicable for any size, many available FFT operators in DSP chips are of power-of-two sizes \cite{Mitra-DSP}.} and the signal will be modified by zero-padding to fit the available FFT size. In that case, the measurement matrix corresponds to the matrix formed by selecting a set of columns from row-orthonormal matrices. While Type-B measurement matrices are widely employed in a vast number of sparse recovery problems, surprisingly, little is known on the LASSO performance based on such measurement matrices.

Measurement matrix that is constructed by concatenating several randomly chosen orthonormal bases is another common type, which is referred to as ``Type-C'' matrix. Such construction can have several variations as shown in Figure \ref{fig:M-Orthogonal}(c) due to certain implementation considerations \cite{Do-08ICASSP,Fowler-12FtSP}. For example, we can exploit parallelism or distributed computation of (\ref{eq:proxGrad}) by using a parallel matrix-vector multiplication. In that case, each sub-block of the measurement matrix would be taken from a partial block of scrambled DFT matrix. In this context, the authors of \cite{Vehkapera-arXiv13} demonstrated that Type-A matrix and Type-C.1 matrix (constructed by concatenating several randomly \emph{square} orthonormal matrices) have the same performance. Except for \cite{Vehkapera-arXiv13}, however, little progress has been made on this type of measurement matrix.

In this paper, we aim to provide analytical characterization for the performance of LASSO under such measurement matrices. In particular, we  derive the MSE of LASSO in the general Type-C setup by using the replica method from statistical physics as in \cite{Kabashima-09JSM,Tanaka-10ISIT,Kabashima-12JSM,Vehkapera-arXiv13,Tulino-13IT,Rangan-12TIT}. Our MSE result encompasses Type-A and Type-B matrices as special cases. Then we compare their performances and behaviors with those for random i.i.d.~Gaussian matrices. We will show that all the structurally orthogonal matrices (including Types A--C) perform at least as well as random i.i.d.~Gaussian matrices over arbitrary setups.\footnote{To make a fair comparison among different setups, we have properly normalized their energy consumption.}

Specifically, we have made the following technical contributions:
\begin{itemize}
\item We show that Type-A matrix has the best MSE performance out of all other types of structurally orthogonal matrices and performs significantly better than the i.i.d.~Gaussian matrices.
\item In contrast to Type-A matrices, the row-orthogonality in Type-B is no longer preserved if $\oN < N$.
The MSE performance of Type-B matrices degrades with decreasing the ratio of $\oN/N$ while they still perform at least as good as their random i.i.d.~Gaussian
counterparts.
\item We show that Type-A, Type-C.1, and Type-C.2 matrices have the same MSE performance. Specifically, \emph{horizontally} concatenating multiple row-orthonormal matrices have the same MSE performance as its single row-orthonormal counterpart. This argument extends the result of
\cite{Vehkapera-arXiv13} to the case of concatenating multiple row-orthonormal matrices. Further, we reveal that the measurement matrices formed by concatenating several randomly orthonormal bases in \emph{vertical} direction result in significant degradation. For example, Type-C.4 and Type-C.5 matrices have the worst performance among Type-C matrices although they are at least as good as their random i.i.d.~Gaussian counterparts.
\end{itemize}

The remainder of the paper is organized as follows. In Section II, we present the problem formulation including fundamental definitions of the structurally orthogonal matrices. In Section III, we provide the theoretical MSE results of LASSO based on the structurally orthogonal matrices. Simulations and discussions are presented in Section IV and the main results are summarized in Section V.

{\em Notations}---Throughout this paper, for any matrix $\qA$, $[\qA]_{i,j}$ refers to the $(i,j)$th entry of $\qA$, $\qA^T$ denotes the transpose of $\qA$, $\qA^H$ denotes the conjugate transpose of $\qA$, $\sqrt\qA$ (or $\qA^\frac{1}{2}$) denotes the principal square root of $\qA$, $\tr(\qA)$ denotes the trace of $\qA$, ${\tt vec}(\qA)$ is the column vector with entries being the ordered stack of columns of $\qA$. Additionally, $\qI_n$ denotes an $n$-dimensional identity matrix, ${\bf 0}$ denotes a zero matrix of appropriate size, ${\bf 1}_n$ denotes an $n$-dimensional all-one vector, $\|\cdot\|_2$ denotes the Euclidean norm, ${\sf I}_{\{\mbox{statement}\}}$ denotes the indicator of the statement, $\aang{\cdot}_X$ represents the expectation operator with respect to $X$, ${\rm log}(\cdot)$ is the natural logarithm, $\delta(\cdot)$ denotes Dirac's delta, $\delta_{i,j}$ denotes Kronecker's delta, $\Extr_{x}\{f(x)\}$ represents the extremization of a function $f(x)$ with respect to $x$, $\sfQ(\xi) \triangleq \frac{1}{\sqrt{2 \pi}} \int_{\xi}^{\infty} e^{x^2/2} \rmd x$ is the standard $\sfQ$-function,
and $\otimes$ denotes the Kronecker product.
We say that the complex random variable $Z$ is a standard Gaussian, if its density function is given by $\calN(z) \triangleq \frac{1}{\pi} e^{-|z|^2}$. That is, the standard complex Gaussian is the circularly-symmetric complex Gaussian with zero mean and unit variance.
Finally, $D \qz$ denotes the complex-valued Gaussian integration measure; i.e., for an $n \times 1$ vector, $\qz$, we have
\begin{equation*}
D\qz = \prod_{i=1}^n \frac{d \dRe\{z_i\} d \dIm\{z_i\} }{\pi} e^{-(\dRe\{z_i\})^2-(\dIm\{z_i\})^2}
\end{equation*}
where $\dRe\{ \cdot \}$ and $\dIm\{ \cdot \}$ extract the real and imaginary components, respectively.

\begin{figure}
\begin{center}
\resizebox{4.5in}{!}{%
\includegraphics*{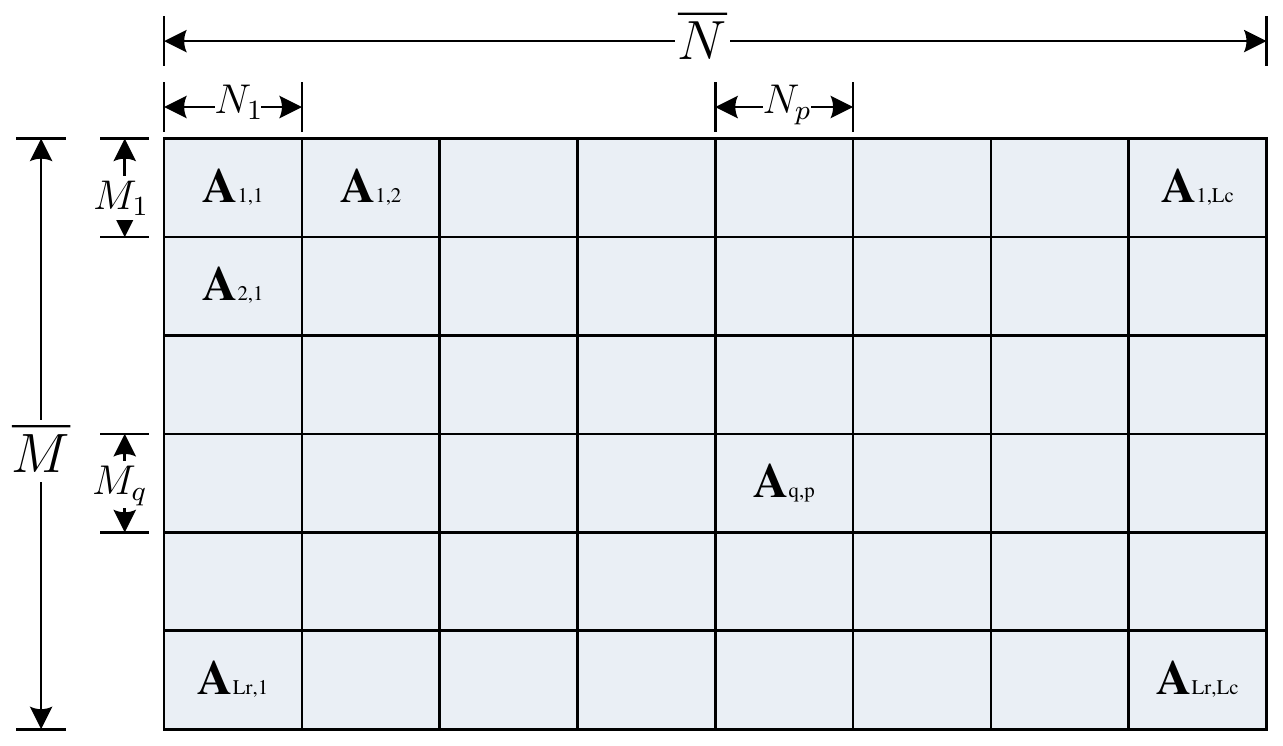} }%
\caption{An example of a structurally random matrix, where $L_c= 8$ and $L_r=6$. Each block is obtained from an independent standard $N \times N$ orthonormal matrix by selecting $M_q$ rows and $N_p$ columns at random.}\label{fig:Type-C-Matrix}
\end{center}
\end{figure}

\section*{\sc II. Problem Formulation}
We consider the sparse signal recovery setup in (\ref{eq:sysModel}), where $\qw$ is assumed to be the standard complex Gaussian noise vector. In addition, let us suppose that
\begin{equation}\label{eq:px0}
    P_0(\qx^{0}) = \prod_{n=1}^{\oN} P_0(x_n),
\end{equation}
where $P_0(x_n) = (1-\rho_x)\delta(x_n) + \rho_x\calN(x_n)$ for $n = 1,\ldots,\oN$, and $\rho_x \in [0,\,1]$ is the fraction of non-zero entries in $\qx^{0}$. That is, the elements of $\qx^{0}$ are sparse and are i.i.d.~generated according to $P_0(x_n)$.

For generality, we consider the measurement matrix $\qA$ made of different blocks as outlined in Fig.~\ref{fig:Type-C-Matrix}, which we refer to it as Type-C matrix. The structurally random matrix was also considered by \cite{Krzakala-12JSM,Barbier-13ArXiv} in the context of compressive sensing for different purposes. In the setup, $\qA \in \bbC^{\oM \times \oN}$ is constructed by vertical and horizontal concatenation of $L_r\times L_c$ blocks as
\begin{equation} \label{eq:Amatrix}
 \qA = \left[
 \begin{array}{ccc}
 \qA_{1,1} & \cdots & \qA_{1,L_c} \\
 \vdots & \ddots  & \vdots \\
 \qA_{L_r,1} & \cdots & \qA_{L_r,L_c}
 \end{array}
 \right],
\end{equation}
where each $\qA_{q,p} \in \bbC^{M_q \times N_p}$ is drawn independently from the Haar measure of $N \times N$ random matrix (referred to as the {\em standard} orthonormal matrix in this paper). To shape $\qA_{q,p}$ in an $M_q \times N_p$-dimensional matrix, we randomly select $M_q$ rows and $N_p$ columns from the standard orthonormal matrix. We denote $\mu_p = N_p/N$ and $\nu_q = M_q/N$ the ``column selection rate'' and ``row selection rate'', respectively. Also, we define $\mu \triangleq \sum_{p} \mu_p = \oN/N$ and $\nu \triangleq\sum_{q} \nu_q = \oM/N$. To make the setup more flexible, we assume that for the $(q,p)$th subblock, the standard orthonormal matrix has been multiplied by $\sqrt{R_{q,p}}$. By setting the values of $R_{q,p}$ appropriately, each block can be made either only zeros or a partial orthonormal matrix.

Corresponding to the measurement matrix made of different blocks, the $\oN$ variables of $\qx^{0}$ are divided into $L_c$ blocks $\{ \qx_p : p=1,\ldots,L_c\}$ with $N_p$ variables in each block. Meanwhile, the $\oM$ measurements $\qy$ are divided into $L_r$ blocks $\{ \qy_q : q=1,\ldots,L_r\}$ with $M_q$ measurements in each block. Note that we have $\oM = \sum_{q=1}^{L_r} M_q$ and $\oN = \sum_{p=1}^{L_c} N_p$. The measurement ratio of the system is given by $\alpha = \oM/\oN$.

\section*{\sc III. Analytical Results}
To facilitate our analysis based on the tools in statistical mechanics, we use the approach introduced in \cite{Kabashima-12JSM,Vehkapera-arXiv13} to reformulate the $\ell_1$-regularized LS problem (\ref{eq:RLS}) in a probabilistic framework. Suppose that the posterior distribution of $\qx$ follows the distribution
\begin{equation} \label{eq:postDis_RLS}
P_{\beta}(\qx|\qy) = \frac{1}{\sfZ_{\beta}(\qy,\qA)} e^{-\beta\left(\frac{1}{\lambda}\|\qy-\qA\qx\|_2^2 + \|\qx\|_1 \right)},
\end{equation}
where $\beta$ is a constant and
\begin{equation} \label{eq:partFun_B}
\sfZ_{\beta}(\qy,\qA) = {\int \rmd\qx}\, e^{-\beta\left(\frac{1}{\lambda}\|\qy-\qA\qx\|_2^2 + \|\qx\|_1 \right)}
\end{equation}
is the partition function (or normalization factor) of the above distribution function. Given the posterior probability of (\ref{eq:postDis_RLS}), the Bayes way of
estimating $\qx$ is given by \cite{Poor-94BOOK}
\begin{equation} \label{eq:estx}
    \ang{\qx}_{P_{\beta}} = {\int\rmd\qx} \, \qx P_{\beta}(\qx|\qy) .
\end{equation}
As $\beta \rightarrow \infty$, the posterior mean estimator (\ref{eq:estx}) condenses to the global minimum of (\ref{eq:RLS}), i.e., $\ang{\qx}_{P_{\beta}} = \hat{\qx}$.

In (\ref{eq:estx}), $\ang{\qx}_{P_{\beta}}$ (or equivalently $\hat{\qx}$) is estimated from $\qy$ given that $\qA$ is perfectly known. Clearly, $\hat{\qx}$ depends on $\qy$
and thus is \emph{random}. We are thus interested in the (average) MSE of $\hat{\qx}$ given by
\begin{align}
\mse &= \oN^{-1} \aangBig{ \| \qx^{0} - \hat{\qx} \|_2^2 }_{\qy} \nonumber \\
         &= \rho_x  - 2 \oN^{-1} \aangBig{ \dRe\left\{ \hat{\qx}^H \qx^{0} \right\} }_{\qy} + \oN^{-1} \aangBig{ \hat{\qx}^H \hat{\qx} }_{\qy},\label{eq:defMSE}
\end{align}
where $\aang{ \cdot }_{\qy}$ denotes an average over $\qy$. Specifically, we define
\begin{equation}
\aang{ f(\qy) }_{\qy} \triangleq \int \rmd \qy \int \rmd \qx^{0} f(\qy) P(\qy|\qx^{0}))P_0(\qx^{0}),
\end{equation}
where $P_0(\qx^{0})$ is defined by (\ref{eq:px0}), and
\begin{equation} \label{eq:postDis_sys}
    P(\qy|\qx^{0}) = \frac{1}{(\pi\sigma_0^2)^{\oM}} e^{-\frac{1}{\sigma_0^2}\|\qy-\qA\qx^{0}\|_2^2}
\end{equation}
is the conditional distribution of $\qy$ given $\qx^{0}$ under (\ref{eq:sysModel}). Our aim of this paper is to derive an analytical result for $\mse$.

In the analysis of $\mse$, we consider $N \rightarrow \infty$, while keeping $\mu_p = N_p/N$ and $\nu_q = M_q/N$ fixed and finite for $p=1,\ldots,L_c$ and $q=1,\ldots,L_r$. For convenience, we refer to this large dimensional regime simply as $N \rightarrow \infty$. Notice that the MSE depends on the measurement matrix $\qA$. However, in the large regime $N \rightarrow \infty$, we expect (or assume) that the average MSE appears to be self-averaging. That is, the MSE for any typical realization of $\qA$ coincides with its average over $\qA$.

From  (\ref{eq:defMSE}), the posterior distribution $P_{\beta}$ plays a role in the MSE. In statistical mechanics, the key for finding the MSE is through
computing the partition function, which is the marginal of $P_{\beta}$, or its logarithm, known as {\em free entropy}. Following the argument of
\cite{Kabashima-09JSM,Vehkapera-arXiv13}, it can be shown that $\mse$ is a saddle point of the free entropy. Thanks to the self-averaging property in the large
dimensional regime, we therefore compute $\mse$ by computing the \emph{average} free entropy
\begin{equation}\label{eq:FreeEn}
    \Phi = \lim_{\beta, N \rightarrow \infty}\frac{1}{\beta N} \aang{\log \sfZ_{\beta}(\qy,\qA)}_{\qy,\qA}.
\end{equation}
The similar manipulation has been used in many different settings, e.g., \cite{Rangan-12TIT,Krzakala-12JSM,Tulino-13IT,Kabashima-12JSM,Vehkapera-arXiv13}. The analysis of (\ref{eq:FreeEn}) is unfortunately still difficult. The major difficulty in (\ref{eq:FreeEn}) lies in the expectations over $\qy$ and $\qA$. We can, nevertheless, greatly facilitate the mathematical derivation by rewriting $\Phi$ as \cite{Kabashima-09JSM,Vehkapera-arXiv13}
\begin{equation}\label{eq:avgFreeEn}
\Phi = \lim_{\beta, N \rightarrow \infty}\frac{1}{\beta N} \lim_{\tau \rightarrow 0}\frac{\partial}{\partial \tau}\log\aang{\sfZ_{\beta}^{\tau}(\qy,\qA)}_{\qy,\qA},
\end{equation}
in which we have moved the expectation operator inside the log-function. We first evaluate $\aang{\sfZ_{\beta}^{\tau}(\qy,\qA)}_{\qy,\qA}$ for an integer-valued $\tau$, and then generalize it for any positive real number $\tau$. This technique is known as the replica method, which emerged from
the field of statistical physics \cite{Edwards-75JPF,Nishimori-01BOOK} and has recently been successfully applied to information/communications theory literature \cite{Kabashima-09JSM,Tanaka-10ISIT,Kabashima-12JSM,Krzakala-12JSM,Vehkapera-arXiv13,Tulino-13IT,Rangan-12TIT,Tanaka-02IT,Moustakas-03TIT,Guo-05IT,Muller-03TSP,Wen-07TCOM,Hatabu-09PRE,Takeuchi-13TIT,Girnyk-14TWC}.

Details of the replica calculation are provided in Appendix A. We here intend to give an intuition on the final analytical results (i.e., Proposition \ref{Pro1} to be shown later). Note that the approach presented here is slightly different from that in Appendix A. Basically, the replica analysis allows us to understand the characteristics of the errors made by LASSO by looking at the signal reconstruction via an equivalent scalar version of the linear system (\ref{eq:sysModel}):
\begin{equation} \label{eq:EqScalSysModel}
 y_p = \hm_{p}x_p^{0} + \sqrt{\hchi_{p}}z_{p},
\end{equation}
where the subscript $p$ indicates that the equivalent linear system characterizes the signal in block $p$, i.e. $\qx_p$, and there are $N_p$ parallel equivalent linear systems of (\ref{eq:EqScalSysModel}) in block $p$; the parameters $(\hm_{p},\hchi_{p})$ are arisen in the replica analysis to be given later in Proposition \ref{Pro1}; $x_p^{0}$ is a random signal generated according to the distribution $P_0(x)$, $z_{p}$ is standard complex Gaussian; and $y_p$ is the effective measurement.

In particular, our analysis shows that the characteristics of LASSO output corresponding to the signal, $\qx_p$, can be analyzed via the LASSO output of the signal $x_p^{0}$ through the effective measurement $y_p$, where $\hm_{p}$ and $\hchi_{p}$ play the role of the \emph{effective} measurement gain and \emph{effective} noise level. Therefore, following (\ref{eq:RLS}), the recovery of $x_p^{0}$ from $y_p$ by LASSO becomes
\begin{equation} \label{eq:RLS_EqScalSysModel}
   \hat{x}_p =  \argmin_{x_p \in \bbC} \left\{ \frac{1}{\hm_{p}}| y_p - \hm_{p}x_p|^2 + |x_p| \right\}.
\end{equation}
Using \cite[Lemma V.1]{Maleki-13IT}, the optimal solution $\hat{x}_{p}$ of (\ref{eq:RLS_EqScalSysModel}) reads
\begin{equation} \label{eq:optOfxp}
 \hat{x}_p
 = \frac{\left( |y_p| - \frac{1}{2}\right)_{+} \frac{y_p}{|y_p|}}{\hm_{p}}.
\end{equation}
Note that $\hat{x}_p$ depends on $y_p$ and is therefore random. Then the MSE of $\hat{x}_p$ is given by $\aang{ | x_p^{0} - \hat{x}_p |^2 }_{y_p} = \rho_x  - 2
\aang{\dRe\{ \hat{x}_p^* x_p^{0} \} }_{y_p} + \aang{ |\hat{x}_p|^2 }_{y_p}$, where $\aang{ \cdot }_{y_p}$ denotes an average over $y_p$ with
\begin{equation} \label{eq:postDis_EqScalSysModel}
    P(y_p|x_p^{0}) = \frac{1}{\pi \hchi_{p}} e^{-\frac{1}{\hchi_{p}}|y_p-\hm_{p}x_p^{0}|^2}.
\end{equation}
As there are $N_p$ parallel equivalent systems in block $p$, the MSE of LASSO reconstruction in group $p$ is given by
\begin{align}
\mse_p &= \frac{N_p}{\oN} \aangBig{ \left| x_p^{0} - \hat{x}_p \right|^2 }_{y_p} \nonumber \\
 &= \frac{\mu_p}{\mu} \aangBig{ \left| x_p^{0} - \hat{x}_p \right|^2 }_{y_p} \nonumber \\
 &= \frac{1}{\mu} \left(\mu_p\rho_x  - 2 \mu_p\aangBig{ \dRe\left\{ \hat{x}_p^* x_p^{0} \right\} }_{y_p} + \mu_p\aangBig{ \left|\hat{x}_p\right|^2 }_{y_p}\right) \nonumber \\
 &=  \frac{1}{\mu} \left(\mu_p\rho_x  - 2m_p + Q_{p}\right),
\end{align}
where the second equality is due to $\mu_p = N_p/N$ and $\mu = \oN/N$, and the last equality follows from the fact that $m_p \triangleq \mu_p \aang{ \dRe\{ \hat{x}_p^* x_p^{0} \}}_{y_p}  $ and $Q_{p} \triangleq \mu_p\aang{ |\hat{x}_p|^2 }_{y_p}$. Using (\ref{eq:optOfxp}) and (\ref{eq:postDis_EqScalSysModel}) and following the steps of \cite[(349)--(357)]{Tulino-13IT}, one can get the analytical expressions of $m_p$ and $Q_{p}$, and then result in an analytical expression of $\mse_p$. We summarize the results in the following proposition.

\begin{Proposition} \label{Pro1}
Consider a Type-C matrix being the measurement matrix. Let $\mse_p$ denote the MSE of LASSO reconstruction in block $p=1,\ldots,L_c$, and define
\begin{subequations} \label{eq:defgc}
\begin{align}
g_{\sfc}(\zeta) &\triangleq \zeta e^{-\frac{1}{4\zeta}} - \sqrt{\pi\zeta} \sfQ\left(\frac{1}{\sqrt{2\zeta}}\right), \\
\aag_{\sfc}(\zeta) &\triangleq e^{-\frac{1}{4\zeta}} - \sqrt{\frac{\pi}{4\zeta}} \sfQ\left(\frac{1}{\sqrt{2\zeta}}\right).
\end{align}
\end{subequations}
Then as $N \rightarrow \infty$, the average MSE over the entire vector becomes
\begin{equation}
 \mse = \sum_{p=1}^{L_c} \mse_{p},
\end{equation}
where $\mse_{p} = (\mu_{p}\rho_x - 2m_{p}+ Q_{p})/\mu $ with
\begin{subequations} \label{eq:mQ_Pro1}
\begin{align}
m_{p} &= \mu_p\rho_x \aag_{\sfc}(\hm_{p}^2+\hchi_{p}) , \\
Q_{p} &= \mu_p \left(\frac{1-\rho_x}{\hm_{p}^2} g_{\sfc}(\hchi_{p}) + \frac{\rho_x}{\hm_{p}^2} g_{\sfc}(\hm_{p}^2+\hchi_{p})\right).
\end{align}
\end{subequations}
In (\ref{eq:mQ_Pro1}), we have defined
\begin{equation} \label{eq:hm_Pro1}
\hm_{p} \triangleq  \frac{\sum_{q=1}^{L_r}\Delta_{q,p}}{\chi_{p}},
\end{equation}
where
\begin{align}
\Delta_{q,p} &= \nu_{q} \frac{\frac{R_{q,p}}{\Gamma_{q,p}^{\star}} }{\lambda + \sum_{l=1}^{L_c} \frac{R_{q,l}}{\Gamma_{q,l}^{\star}} }, \label{eq:def_Delta} \\
\chi_{p} &= \mu_{p} \left( \frac{1-\rho_x}{\hm_{p}} \aag_{\sfc}(\hchi_{p}) + \frac{\rho_x}{\hm_{p}} \aag_{\sfc}(\hm_{p}^2+\hchi_{p}) \right).
\end{align}
The parameters $\Gamma_{q,p}^{\star}$ and $\hchi_{p} = \sum_{q=1}^{L_r} \hchi_{q,p}$ are the solutions of the coupled equations
\begin{subequations} \label{eq:mQx_Pro1}
\begin{align}
\Gamma_{q,p}^{\star} &= \frac{1 -\Delta_{q,p}}{\chi_{p}}, \label{eq:mQx_Pro1a} \\
\hchi_{q,p} &=  \sum_{r=1}^{L_c} \left(\mse_{r}- \frac{\sigma_0^2\chi_{r}}{\lambda} \right)\Gamma'_{q,p,r} +
\frac{\mse_{p}}{\chi_{p}^{2}} - \frac{\sigma_0^2}{\lambda} \Gamma_{q,p}^{\star},
\end{align}
\end{subequations}
where
\begin{align}
\Gamma'_{q,p,r} = \left(\frac{1}{\nu_{q}} \frac{\Delta_{q,p}\Delta_{q,r}\Gamma_{q,p}^{\star}\Gamma_{q,r}^{\star}}{(1-2\Delta_{q,p})(1-2\Delta_{q,r})} \left( 1+ \sum_{l=1}^{L_c} \frac{1}{\nu_{q}} \frac{\Delta_{q,l}^2}{1-2\Delta_{q,l}}\right)^{-1}
- \frac{\Gamma_{q,r}^{\star2}}{1-2\Delta_{q,r}} \delta_{p,r} \right).
\end{align}
\end{Proposition}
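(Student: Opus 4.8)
The plan is to derive Proposition~\ref{Pro1} by carrying out the replica evaluation of the average free entropy $\Phi$ in (\ref{eq:avgFreeEn}) and then extracting $\mse$ from the resulting saddle point, along the lines of the computations in Appendix~A. \textbf{Step 1 (replica setup).} For integer $\tau$, expand $\langle \sfZ_\beta^\tau(\qy,\qA)\rangle_{\qy,\qA}$ as a $(\tau+1)$-fold integral over a "true" copy $\qx^{(0)}=\qx^0$ weighted by $P_0$ and $\tau$ replicas $\qx^{(1)},\dots,\qx^{(\tau)}$ each weighted by the LASSO Gibbs factor $e^{-\beta(\frac1\lambda\|\qy-\qA\qx\|_2^2+\|\qx\|_1)}$, then average over the noise $\qw$ (hence $\qy$) and over $\qA$. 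This reduces the problem to a single joint average over $\qA$ of a product of Gaussian-type weights in the variables $\qA\qx^{(a)}$.

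\textbf{Step 2 ($\qA$-average via spherical integrals).} The structural input is that each block is $\qA_{q,p}=\sqrt{R_{q,p}}$ times an $M_q\times N_p$ submatrix of an independent $N\times N$ Haar-distributed unitary $\qW_{q,p}$. Padding $\qx_p^{(a)}$ with zeros to length $N$, the vector $\qW_{q,p}\qx_p^{(a)}$ is distributed uniformly on the complex sphere of radius $\|\qx_p^{(a)}\|_2$, so the measurement contributions $\{\qA_{q,p}\qx_p^{(a)}\}_a$ in row block $q$ form $M_q$ coordinates of $(\tau+1)$ vectors whose large-$N$ statistics are governed solely by the column-block overlap matrix $Q_p^{ab}\triangleq \frac1N(\qx_p^{(a)})^H\qx_p^{(b)}$. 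Inserting delta functions to pin the $Q_p^{ab}$ (with Fourier-conjugate parameters $\hat Q_p^{ab}$) decouples the $\qA$-average from the $\qx$-integral; the $\qA$-average itself then becomes a product over $q$ of unitary-group (Harish--Chandra/Itzykson--Zuber-type) spherical integrals, whose $N\to\infty$ asymptotics are controlled by the relevant $R$-transform and yield a logarithmic-determinant "energetic" term in $\{Q_p^{ab}\}$. Linearizing this term about the replica-symmetric point is what produces the auxiliary objects $\Delta_{q,p}$, $\Gamma_{q,p}^\star$, and the third-order kernel $\Gamma'_{q,p,r}$ in (\ref{eq:hm_Pro1})--(\ref{eq:def_Delta}) and (\ref{eq:mQx_Pro1}).

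\textbf{Step 3 (entropic part and RS ansatz).} The remaining $\qx$-integral with the conjugate fields $\hat Q_p^{ab}$ factorizes across the $N_p$ coordinates of block $p$, collapsing to a single-letter scalar integral; this is exactly what realizes the equivalent scalar channel (\ref{eq:EqScalSysModel}) with effective gain $\hm_p$ and effective noise power $\hchi_p$. Then impose the replica-symmetric form on $Q_p^{ab}$ and $\hat Q_p^{ab}$ (a common diagonal value, a common replica--replica value, and a distinguished $a=0$ row/column), introduce the large-$\beta$ scalings in which $\chi_p$ plays the role of $\beta$ times the replica-replica gap of $Q_p$, send $\tau\to0$, and take $\beta\to\infty$. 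Extremizing the resulting $\Phi$ over all order parameters gives the coupled fixed-point system (\ref{eq:mQx_Pro1}) together with (\ref{eq:hm_Pro1}). Finally, identify the terms in (\ref{eq:defMSE}): the overlaps $\frac1N\langle\dRe\{\hat\qx^H\qx^0\}\rangle$ and $\frac1N\langle\hat\qx^H\hat\qx\rangle$ become block sums $\sum_p m_p$ and $\sum_p Q_p$ of the scalar-channel averages $m_p=\mu_p\langle\dRe\{\hat x_p^*x_p^0\}\rangle_{y_p}$, $Q_p=\mu_p\langle|\hat x_p|^2\rangle_{y_p}$; plugging in the closed-form denoiser (\ref{eq:optOfxp}) and the Gaussian law (\ref{eq:postDis_EqScalSysModel}) and carrying out the Gaussian-mixture integrals as in \cite[Eqs.~(349)--(357)]{Tulino-13IT} yields the functions $g_{\sfc},\aag_{\sfc}$ and the formulas (\ref{eq:mQ_Pro1}).

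The main obstacle is \textbf{Step 2}: performing the $\qA$-average for the doubly-indexed block structure with simultaneous row and column sub-selection. Unlike the i.i.d.\ Gaussian case, each Haar average over $\qW_{q,p}$ calls for the large-$N$ asymptotics of unitary spherical integrals, and one must track carefully how the $L_r$ row blocks and $L_c$ column blocks couple through $R_{q,p}$ — in particular obtaining the quantity $\Gamma_{q,p}^\star$ that satisfies (\ref{eq:mQx_Pro1a}) and the kernel $\Gamma'_{q,p,r}$. As is standard for this methodology, the replica-symmetric ansatz, the analytic continuation $\tau\to0$, and the self-averaging of $\mse$ are taken as working assumptions rather than proved.
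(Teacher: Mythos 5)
Your blueprint follows the same route as the paper's Appendix A: replica identity for integer $\tau$, pinning of the block overlap matrices by delta functions with conjugate parameters, a large-$N$ evaluation of the Haar average that yields an ``energetic'' extremization term, the replica-symmetric ansatz with the $\beta$-rescaled order parameters ($\chi_p$ as $\beta$ times the replica--replica gap), the limits $\tau\to 0$ and $\beta\to\infty$ producing the decoupled scalar channel, and finally the Tulino-style Gaussian integrals of the soft-thresholding denoiser that give $g_{\sfc}$, $\aag_{\sfc}$ and the expressions for $m_p$, $Q_p$, $\chi_p$ (together with $r_{q,p}=\mu_p\rho_x$). So the plan is pointed in the right direction and the closing steps (Step 3) match the paper.

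However, the step you yourself flag as the main obstacle is exactly where every block-structure-specific object in the proposition originates, and it is left undone; this is a genuine gap rather than a detail. The paper does not invoke HCIZ/$R$-transform asymptotics off the shelf. It (i) replaces row/column deletion by diagonal projection matrices so each block reads $\qR_{q,p}^{1/2}\qW_{q,p}\qT_{q,p}^{1/2}$ with $\qW_{q,p}$ a full $N\times N$ Haar matrix, and accordingly defines the overlaps with the projection $\qT_{q,p}$ inserted; (ii) diagonalizes $\qSigma\qQ_{q,p}$ under the RS ansatz (Lemma \ref{lemma_eigProjectionMatrix}, Appendix B) so the exponent splits over a few orthogonal directions of prescribed lengths; and (iii) proves a dedicated asymptotic formula for the Haar average of $\exp\{-\sigma^{-2}\|\sum_p \qR_p^{1/2}\qW_p\qx_p\|^2\}$ (Lemma \ref{lemma_HaarIntegr}, Appendix C, extending Vehkaper\"a et al.'s Lemma 1 to complex Haar matrices and general $\qR_p$), obtained by writing the uniform spherical law through delta functions and applying Gaussian/Hubbard--Stratonovich integration plus a saddle point. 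It is this lemma that produces the function $H_q$, its extremizers $\Gamma_{q,p}^{\star}$ and hence $\Delta_{q,p}$. Moreover, the kernel $\Gamma'_{q,p,r}$ is not obtained by ``linearizing the energetic term about the RS point'' but as the derivative $\partial\Gamma_{q,p}^{\star}/\partial\chi_{q,r}$ appearing in the saddle-point equations, computed explicitly via the inverse function theorem applied to the Jacobian $\partial\chi_{q,i}/\partial\Gamma_{q,j}^{\star}$ and the matrix inversion lemma. Without supplying these ingredients (the Haar-integral lemma, the eigenvalue computation, and the $\Gamma'$ calculation), the coupled fixed-point system for $\Gamma_{q,p}^{\star}$ and $\hchi_{q,p}$ and the stated form of $\Gamma'_{q,p,r}$ cannot be derived, so the proposal, while structurally faithful to the paper, stops short of the proposition's actual content at its central step.
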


\begin{proof}
See Appendix A.
\end{proof}

Note that except for $\{ m_p, Q_p\}$, the remaining parameters in Proposition \ref{Pro1} are arisen from the replica analysis and can be regarded auxiliary. The parameters $\{\Gamma_{q,p}^{\star}, \hchi_{q,p}\}$ have to be solved in (\ref{eq:mQx_Pro1}) for all $p,q$.

Proposition \ref{Pro1} provides not only a new finding but also a unified formula that embraces previous known results \cite{Tulino-13IT,Vehkapera-arXiv13}. For example, the MSE of LASSO under Type-A measurement matrix in \cite{Tulino-13IT} can be obtained if we set $L_c = L_r = 1$ and $ \mu_1 = 1$ in Propositions \ref{Pro1}. Clearly, by setting $\mu_1 < 1$, we are also able to further study the MSE of LASSO under Type-B measurement matrix. In the next section, we will discuss the MSEs of LASSO under Type-A and Type-B measurement matrices and we compare their performances and behaviors with those for random i.i.d.~Gaussian matrices.

Another existing result is related to the Type-C.1 measurement matrix in \cite{Vehkapera-arXiv13} in which $L_r = 1$ and $ \mu_p = 1$ for $p=1,\ldots,L_c$. In \cite{Vehkapera-arXiv13}, a Type-C.1 orthogonal matrix is referred to as the $T$-orthogonal matrix as the matrix is constructed by concatenating $T$ independent standard orthonormal matrices. Also, \cite{Vehkapera-arXiv13} only considered the real-valued setting, where the signal $\qx^0$, the measurements $\qy$, and the measurement matrix $\qA$ are all real-valued. In this case, the $\ell_1$-norm is defined as $\|\qx\|_1 \triangleq \sum_{n} |x_n|$, which is different from the complex $\ell_1$-norm (see footnote $1$). In the real-valued setting, the analytical MSE expression of LASSO in Proposition \ref{Pro1} also holds while $g_{\sfc}$ and $\aag_{\sfc}$ in (\ref{eq:defgc}) should be replaced by
\begin{subequations} \label{eq:defgr}
\begin{align}
g_{\sfr}(\zeta) &\triangleq - 2 \left( \sqrt{\frac{\zeta}{2 \pi}} e^{-\frac{1}{2\zeta}} - (1+\zeta) \sfQ\left(\frac{1}{\sqrt{\zeta}}\right) \right), \\
\aag_{\sfr}(\zeta) &\triangleq 2 \sfQ\left(\frac{1}{\sqrt{\zeta}}\right).
\end{align}
\end{subequations}

The difference between (\ref{eq:defgc}) and (\ref{eq:defgr}) is significant, and can be understood from (\ref{eq:RLS_EqScalSysModel}) by considering its real-valued counterpart. In the real-valued setting of (\ref{eq:RLS_EqScalSysModel}), $x_p^0$ and $y_p$ are real-valued, and the optimal solution becomes $\hat{x}_p = \left( |y_p| - \frac{1}{2}\right)_{+}/\hm_{p}$, which is quite different from its complex-valued counterpart in (\ref{eq:optOfxp}). This difference turns out to be reflected on $m_p$ and $Q_p$ and thus on $g_{\sfc}$ and $\aag_{\sfc}$ of (\ref{eq:mQ_Pro1}). In particular, the MSE of LASSO with the $T$-orthogonal matrix \cite{Vehkapera-arXiv13} can be perfectly recovered if we set $L_r = 1$, and those $ \mu_p = 1$ and $R_{1,p}=1$ for $p=1,\ldots,L_c$ in Proposition \ref{Pro1} and replace $g_{\sfc}$ and $\aag_{\sfc}$ by $g_{\sfr}$ and $\aag_{\sfr}$, respectively. Clearly, Proposition \ref{Pro1} provides a unified result that allows us to quantify the MSE of LASSO under a variety of measurement matrices. We will present detailed discussions in the next section.

\section*{IV. Discussions}
\subsection*{A. Type-B Orthogonal Matrix}
In this subsection, we aim to study the MSE of LASSO under Type-A and Type-B measurement matrices. In particular, we will compare their performances and behaviors with those for random i.i.d.~Gaussian matrices. To make comparison fair between different setups, all cases of the measurement matrices are normalized so that $\aang{\tr(\qA\qA^H)}_{\qA} = \oM$ (referred to as the power constraint of the measurement matrix). If the elements of $\qA$ are i.i.d.~Gaussian random variables with zero mean and variance $1/\oN$, then the power constraint of the measurement matrix is satisfied. We call this matrix the i.i.d.~Gaussian matrix. On the other hand, if $\qA$ is a Type-A matrix, the power constraint of the measurement matrix is naturally satisfied, and in fact, it satisfies the more stringent condition $\tr(\qA\qA^H) = \oM$. Meanwhile, in the Type-B setup, we set the gain factor $R_{1,1} = N/\oN = 1/\mu$ to satisfy this power constraint.

Since there is only one block, i.e., $L_c=L_r=1$, in the Type-A and Type-B setups, we omit the block index $(q,p)$ from all the concerned parameters hereafter, and Proposition \ref{Pro1} is anticipated to be greatly simplified. The MSE of LASSO under Type-B orthogonal measurement matrix is given as follows.

\begin{Corollary} \label{Cor1}
With the Type-B orthogonal measurement matrix, the MSE of LASSO is given by $\mse = \rho_x-2m+Q$, where $(m,Q)$ are same as those in (\ref{eq:mQ_Pro1}) while the block index $p$ is omitted. The parameters $(m,Q)$ are functions of $(\hm,\hchi)$ which can be obtained by solving the following set of equations
\begin{subequations} \label{eq:par1_TypeB}
\begin{align}
   \chi &= \mu \left( \frac{1-\rho_x}{\hm} \aag_{\sfc}(\hchi) + \frac{\rho_x}{\hm} \aag_{\sfc}(\hm^2+\hchi) \right), \\
   \hchi &= \mse \left( \Gamma'+ \frac{1}{\chi^{2}} \right) - \frac{\sigma_0^2}{\lambda} \left( \chi \Gamma' +\Gamma^{\star} \right),
\end{align}
\end{subequations}
with the following definitions
\begin{subequations} \label{eq:par2_TypeB}
\begin{align}
   \hm &\triangleq \frac{\lambda+R\chi + \sqrt{\left(\lambda-R\chi\right)^{2}-4\lambda\nu R\chi}}{2\lambda \chi}, \\
   \Gamma^{\star} &\triangleq \frac{\lambda-R\chi - \sqrt{\left(\lambda-R\chi\right)^{2}-4\lambda\nu R\chi}}{2\lambda \chi}, \\
   \Gamma' &\triangleq -\left(\frac{1-R^{-1}}{\Gamma^{\star 2}} + \frac{R^{-1}\lambda^2}{(\lambda\Gamma^{\star}+R)^2} + \frac{\nu R^2-R}{\Gamma^{\star 2}(\lambda\Gamma^{\star}+R)^2} \right)^{-1}.
\end{align}
\end{subequations}
\end{Corollary}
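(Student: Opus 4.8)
The plan is to obtain Corollary~\ref{Cor1} simply as the $L_c=L_r=1$ instance of Proposition~\ref{Pro1}: the whole proof is a controlled reduction of the coupled system (\ref{eq:mQ_Pro1})--(\ref{eq:mQx_Pro1}) together with the definitions (\ref{eq:hm_Pro1}), (\ref{eq:def_Delta}). With a single block all indices $(q,p)$ are dropped and every sum over $p,q,r,l$ collapses to one term. Then $\mse=\sum_p\mse_p$ collapses to $\mse=\rho_x-2m+Q$, where $m,Q$ are the single-block specializations of the quantities in (\ref{eq:mQ_Pro1}) (the $\mu_p/\mu$ normalization becoming unity). The $\chi$-relation becomes the first line of (\ref{eq:par1_TypeB}) verbatim, and the $\hchi$-relation, after collecting the $\mse$-proportional and the $\sigma_0^2$-proportional terms of (\ref{eq:mQx_Pro1}), becomes the second line of (\ref{eq:par1_TypeB}) with $\Gamma'$ standing for the single-block value of $\Gamma'_{q,p,r}$.

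Next I would eliminate the auxiliary variable $\Delta$ in favour of $\chi$. From (\ref{eq:hm_Pro1}) and (\ref{eq:mQx_Pro1a}) the single-block identities are $\Delta=\hm\chi$ and $1-\Delta=\Gamma^\star\chi$, so $\hm+\Gamma^\star=1/\chi$; substituting these into the single-block form of (\ref{eq:def_Delta}), i.e.\ $\Delta(\lambda\Gamma^\star+R)=\nu R$, gives a quadratic in $\hm$ (equivalently in $\Gamma^\star=1/\chi-\hm$), namely $\lambda\chi\,\hm^{2}-(\lambda+R\chi)\hm+\nu R=0$. Solving this quadratic and retaining the branch that keeps $\hm>0$ and $\Delta\in(0,1)$ — equivalently, the branch reproducing the pure orthogonal limit ($R\to1$, $\nu=\mu=1$) — yields the closed forms in (\ref{eq:par2_TypeB}) for $\hm$ and $\Gamma^\star$. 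The only genuinely delicate point here is this root selection; the algebra itself is elementary.

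Finally I would reduce $\Gamma'$. Putting $L_c=1$ in the definition of $\Gamma'_{q,p,r}$, the bracketed geometric-series-like factor $\big(1+\nu^{-1}\Delta^{2}/(1-2\Delta)\big)^{-1}$ simplifies, and combining it with the $-(\Gamma^\star)^{2}/(1-2\Delta)$ term telescopes the whole expression down to $\Gamma'=-\nu(\Gamma^\star)^{2}/\big(\nu(1-2\Delta)+\Delta^{2}\big)$. Substituting $\Delta=\nu R/(\lambda\Gamma^\star+R)$, clearing denominators, and decomposing the result by partial fractions in $\Gamma^\star$ and in $(\lambda\Gamma^\star+R)$ produces exactly the three-term expression for $\Gamma'$ in (\ref{eq:par2_TypeB}). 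The main obstacle throughout is the bookkeeping in these last two steps: one must move freely between the three equivalent representations of $\Delta$ — as $\hm\chi$, as $1-\Gamma^\star\chi$, and as $\nu R/(\lambda\Gamma^\star+R)$ — and use the quadratic relation to cancel cross terms, while confirming that the chosen branch keeps $m$, $Q$, $\chi$ and $\hchi$ nonnegative. Once these closed forms are in hand, substituting them back leaves (\ref{eq:par1_TypeB}) as a fixed-point system in $(\chi,\hchi)$ alone, with $\mse$, $m$, $Q$ obtained from $(\hm,\hchi)$ through (\ref{eq:mQ_Pro1}) — which is the assertion of the corollary.
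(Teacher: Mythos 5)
Your overall route is exactly the paper's: specialize Proposition~\ref{Pro1} to $L_c=L_r=1$ and eliminate $\Delta$ through (\ref{eq:def_Delta}) and (\ref{eq:mQx_Pro1a}). Your intermediate algebra is also correct: the single-block identities $\Delta=\hm\chi$, $1-\Delta=\Gamma^\star\chi$, $\hm+\Gamma^\star=1/\chi$ and $\Delta(\lambda\Gamma^\star+R)=\nu R$ do follow from (\ref{eq:hm_Pro1}), (\ref{eq:mQx_Pro1a}), (\ref{eq:def_Delta}), they give the quadratic $\lambda\chi\,\hm^{2}-(\lambda+R\chi)\hm+\nu R=0$, and your compact reduction $\Gamma'=-\nu\Gamma^{\star2}/\bigl(\nu(1-2\Delta)+\Delta^{2}\bigr)$ agrees with implicit differentiation of $\Gamma^\star\chi=1-\Delta$.

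The gap is the last step, which you assert rather than carry out: solving your quadratic does \emph{not} ``yield the closed forms in (\ref{eq:par2_TypeB})''. Your quadratic has discriminant $(\lambda+R\chi)^{2}-4\lambda\nu R\chi=(\lambda-R\chi)^{2}+4(1-\nu)\lambda R\chi$, whereas (\ref{eq:par2_TypeB}) displays $(\lambda-R\chi)^{2}-4\lambda\nu R\chi$; moreover the branch you correctly single out by the orthogonal limit is the one with a \emph{minus} sign in front of the radical for $\hm$ (plus for $\Gamma^\star$), opposite to the printed signs. A concrete test: for $\nu=\mu=R=1$ the LASSO problem is exactly separable, so Proposition~\ref{Pro1} must (and does, via the root $\Delta=\chi/\lambda$) give $\hm=1/\lambda$; your minus-branch root gives $\hm=1/\lambda$ at the fixed point (where $\chi<\lambda$ because $\aag_{\sfc}<1$ in (\ref{eq:par1_TypeB})), while the printed expression equals $1/\lambda$ only if $4\lambda\chi=0$ and its radicand can even turn negative. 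Similarly, substituting $\Delta=\nu R/(\lambda\Gamma^\star+R)$ with $u=\lambda\Gamma^\star+R$, your $\Gamma'$ corresponds to $u^{2}-2\nu Ru+\nu R^{2}$ while the three-term form in (\ref{eq:par2_TypeB}) corresponds to $u^{2}-2u+\nu R^{2}$; they coincide only when $\nu R=1$. So the very bookkeeping you flag as ``the only genuinely delicate point'' is where the proof is missing: either you must exhibit the algebra reconciling your (correct) quadratic and $\Gamma'$ with (\ref{eq:par2_TypeB}) --- which the unitary check shows cannot be done verbatim --- or you must state the corrected closed forms (radicand $(\lambda+R\chi)^{2}-4\lambda\nu R\chi$, minus/plus branches for $\hm$/$\Gamma^\star$, and the matching $\Gamma'$) and note that the corollary as printed carries sign typos. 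As written, the proposal stops one step short of a proof of the stated formulas.
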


\begin{proof}
The above results can be obtained by substituting the corresponding parameters of Type-B setup, i.e., $L_c=L_r=1$, into Proposition \ref{Pro1}. In addition, using (\ref{eq:def_Delta}) and (\ref{eq:mQx_Pro1a}), we have eliminated $\Delta$.
\end{proof}

Let us first consider the Type-A setup, where we have $\nu=\alpha$ and $\mu=1$. If we set the gain factor $R = 1/\alpha$, then we have $\nu R = 1$. Then (\ref{eq:par2_TypeB}) can be further simplified in the form
\begin{subequations} \label{eq:par2_TypeA}
\begin{align}
   \hm &= \frac{\lambda+\alpha^{-1}\chi + \sqrt{\left(\lambda-\alpha^{-1}\chi\right)^{2}-4\lambda \chi}}{2\lambda \chi}, \\
   \Gamma^{\star} &= \frac{\lambda-\alpha^{-1}\chi - \sqrt{\left(\lambda-\alpha^{-1}\chi\right)^{2}-4\lambda \chi}}{2\lambda \chi}, \\
   \Gamma' &= -\left(\frac{1-\alpha}{\Gamma^{\star 2}} + \frac{\alpha\lambda^2}{(\lambda\Gamma^{\star}+\alpha^{-1})^2} \right)^{-1}.
\end{align}
\end{subequations}
Recall that in the real-valued setting, $g_{\sfc}$ and $\aag_{\sfc}$ in (\ref{eq:defgc}) should be replaced by $g_{\sfr}$ and $\aag_{\sfr}$ in (\ref{eq:defgr}). In this case, the above result gives exactly the same MSE result as reported in \cite[Example 2]{Vehkapera-arXiv13}. It should be noticed that the setting of ``$R = 1/\alpha$'' is used above to align the setting of \cite{Vehkapera-arXiv13}. According to the power constraint of the measurement matrix in this paper, we should set $R = 1$ rather than $R = 1/\alpha$.

Before proceeding, we present numerical experiments to verify our theoretical results. In the experiments, Type-A and Type-B matrices were generated from a
randomly scrambled $N\times N$ DFT matrix with $N=2^{15}$. The proximal gradient method (\ref{eq:proxGrad}) was used to solve LASSO and obtain the reconstruction
$\hat{\qx}$. The experimental average MSE was obtained by averaging over $10,000$ independent realizations. To form a measurement matrix, the selected column and
row sets at each realization were changed randomly. The experimental average MSEs of LASSO under different selecting rates $\mu$ are listed in Table
\ref{tab:difSelRate} in which the theoretical MSE estimates by Corollary \ref{Cor1} are also listed for comparison, with the parameters: $\alpha=0.5$,
$\lambda=0.1$, $\sigma_0^2 = 10^{-2}$, and $\rho_x=0.15$. In Table \ref{tab:difRegPar}, we fixed the selecting rate $\mu= 0.75$ and repeated the previous
experiment with different regularization parameters $\lambda$. Finally, in Table \ref{tab:difNoiseLevel}, we fixed the selecting rate $\mu = 0.75$ and
regularization parameter $\lambda=0.1$ and repeated the experiment with different noise levels $\sigma_0^2$. We see that for all the cases, the differences between
the two estimates are inappreciable. Therefore, Corollary \ref{Cor1} provides an excellent estimate of the MSE of LASSO in large systems.

\begin{table}
\caption{Comparison between experimental and theoretical MSEs of LASSO under different selecting rates $\mu$ for $\alpha=0.5$,  $\lambda=0.1$, $\sigma_0^2 =10^{-2}$, and $\rho_x=0.15$.}
\begin{center}
\begin{tabular}{c||ccccc}
\toprule
 $\mu$ & $1$ & $0.75$ & $0.5$ & $0.25$ & $0.1$ \\
\hline
 Theory (dB) &  $-19.11$ & $-18.72$ & $-18.37$ & $-18.06$ & $-17.88$ \\
\hline
 Experiment (dB) & $-19.11$ & $-18.72$ & $-18.37$  & $-18.07$ & $-17.91$ \\
\bottomrule
\end{tabular}
\end{center}\label{tab:difSelRate}
\end{table}

\begin{table}
\caption{Comparison between experimental and theoretical MSEs of LASSO under different regularization parameter $\lambda$ for $\alpha=0.5$, $\mu=0.75$, $\sigma_0^2 = 10^{-2}$, and $\rho_x=0.15$.}
\begin{center}
\begin{tabular}{c||ccccc}
\toprule
 $\lambda$ & $0.1$ & $0.3$ & $0.5$ & $0.7$ & $0.9$ \\
\hline
 Theory (dB) &  $-18.72$ & $-16.39$ & $-13.72$ & $-11.91$ & $-10.70$ \\
\hline
 Experiment (dB) & $-18.72$ & $-16.39$ & $-13.72$ & $-11.91$ & $-10.70$ \\
\bottomrule
\end{tabular}
\end{center}\label{tab:difRegPar}
\end{table}

\begin{table}
\caption{Comparison between experimental and theoretical MSEs of LASSO under different noise levels $\sigma_0^2$ for $\alpha=0.5$, $\mu=0.75$, $\lambda=0.1$, and $\rho_x=0.15$.}
\begin{center}
\begin{tabular}{c||ccccc}
\toprule
 $1/\sigma_0^2$ (dB) & $0$ & $10$ & $20$ & $30$ & $40$\\
\hline
Theory (dB) &  $0.013$ & $-10.45$ & $-21.09$ & $-28.05$ & $-29.04$ \\
\hline
Experiment (dB) &  $0.013$ & $-10.45$ & $-21.10$ & $-28.06$ & $-29.05$  \\
\bottomrule
\end{tabular}
\end{center}\label{tab:difNoiseLevel}
\end{table}

Notice that unlike those works (e.g., \cite{Vehkapera-arXiv13}) employing {\tt CVX} \cite{cvx} for the LASSO problem, we used the proximal gradient method in
conjunction with the FFT operators, which allows us to deal with signal sizes as large as $10^5$ on a typical personal computer in about a few seconds. This
indicates that orthogonal matrices are highly relevant for large-scale compressive sensing applications and as a result the theoretical result based on the
assumption of $N \rightarrow \infty$ is useful. Even so, it would be essential to understand how far away the result based on infinitely large system is from that
for finite-sized systems. To this end, we plot in Figure \ref{fig:Convergences} the average MSEs of LASSO for different sizes of $N \times N$ DFT (or DCT) matrices
in which the measurement ratio is either $\alpha=0.5$ or $\alpha=0.35$, and the other parameters are fixed to be $\mu=0.75$, $\lambda=0.1$, and $\rho_x=1/8$.
Markers correspond to the experimental average MSEs for $N = 2^5, 2^6, \ldots, 2^{14}$, averaged over $100,000$ realizations of the LASSO problem under the measurement matrices constructed by the partial DFT and DCT matrices. Following the
methodology of \cite{Kabashima-12JSM,Vehkapera-arXiv13}, solid and dashed lines are plotted based on the experimental MSEs fitted with a quadratic function of $1/N$, which
provide the experimental estimates of the average MSEs. Extrapolation for $N\rightarrow \infty$ provides the estimates for the experimental MSE. Filled markers
represent the predictions obtained through Corollary \ref{Cor1}. We see that as $N \rightarrow \infty$, the experimental and the theoretical MSEs are identical as
expected. In addition, as the dimension of the system is above the order of $10^2$, the theoretical MSE provides a realizable prediction whatever $N \times N$
standard orthonormal matrices are adopted.

\begin{figure}
\begin{center}
\resizebox{4.5in}{!}{%
\includegraphics*{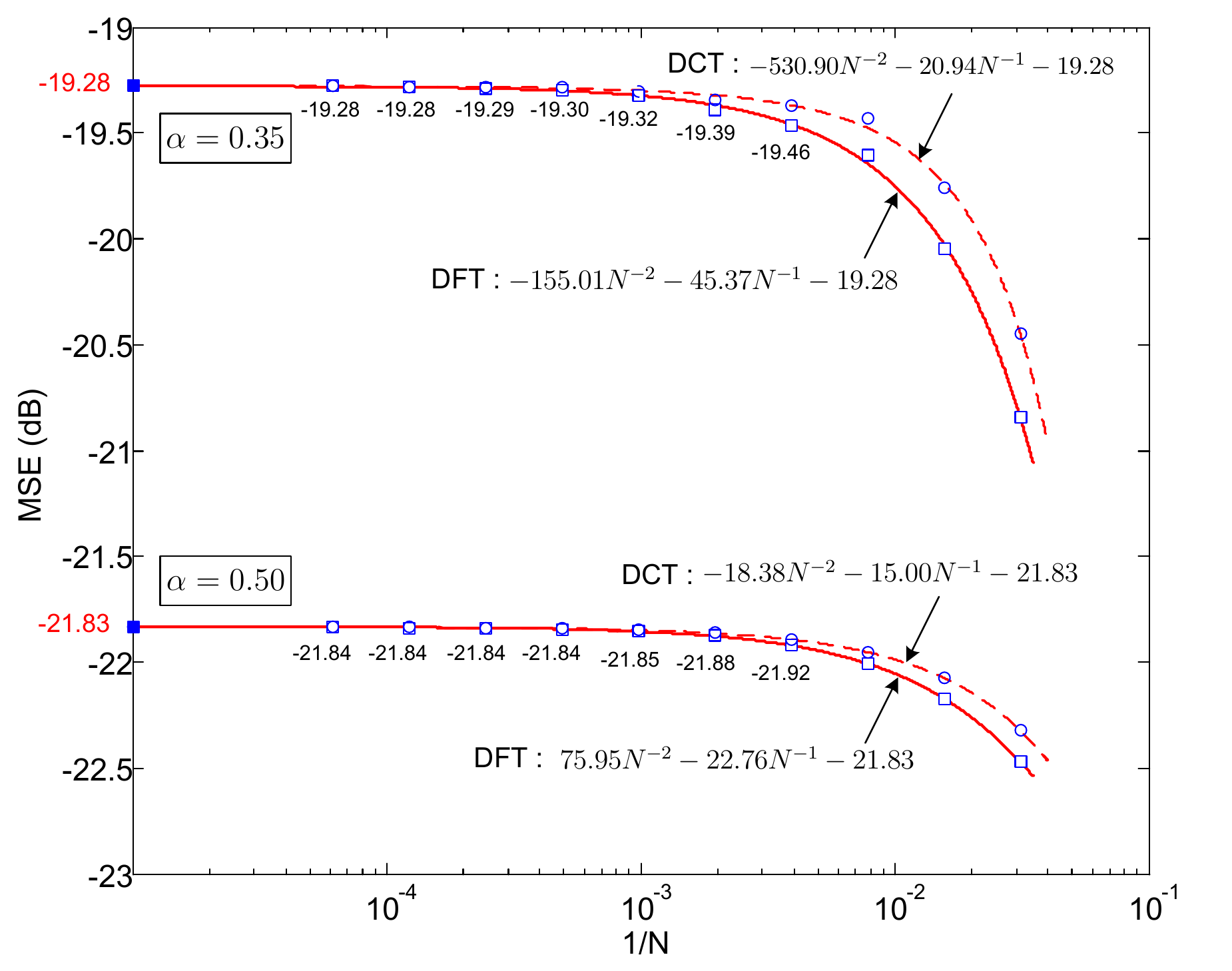} }%
\caption{Average MSE against the system dimension for $\alpha=0.35$ and $\alpha=0.5$. The other parameters are set as $\mu=0.75$, $\lambda=0.1$, and $\rho_x=1/8$. The markers correspond to the experimental average MSEs for $N = 2^5, 2^6, \ldots, 2^{14}$, averaged over $10^5$ realizations of the LASSO problem. Solid and dashed lines are plotted based on the experimental MSEs fitted with a quadratic function of $1/N$, which provided the experimental estimates of the average MSEs. Filled markers are the predictions using Corollary \ref{Cor1}.}\label{fig:Convergences}
\end{center}
\end{figure}

Because of the high accuracy and simplicity of the analytical result, we use the theoretical expression to study the behaviors of the MSE. In Figure
\ref{fig:iidVsDFT}, we compare the MSEs of LASSO for various regularization parameter $\lambda$ under different types of measurement matrices. The solid line,
dotted lines, and dashed line correspond to the MSEs under Type-A setup, Type-B setup, and the i.i.d.~Gaussian setup, respectively. The theoretical MSE of LASSO
under i.i.d.~Gaussian matrices is given by \cite[(133)]{Tulino-13IT} while those under Type-A and Type-B matrices are given by Corollary \ref{Cor1}. In fact, the
Type-A setup can be obtained by setting the column selection rate as $\mu = 1.00$ in the Type-B setup. Therefore, the column selection rates from bottom to top are
$\mu = 1.00$ (Type-A) and $\mu=0.75, 0.5, 0.25, 0.01$ (Type-B). In this experiment, the measurement ratio is fixed to be $\alpha=0.5$. Therefore, in order to
satisfy the fixed measurement ratio, we vary the row selection rate according to $\nu = \alpha \times \mu$ for each $\mu$. Figure \ref{fig:iidVsDFT} demonstrates
that Type-A setup has the best MSE performance compared to Type-B setups and is significantly better than the case of random i.i.d.~Gaussian matrices. In contrast
to Type-A matrices, the row-orthogonality in Type-B is no longer preserved if $\mu < 1$. The MSE performance of Type-B matrices degrades with decreasing the column
selection rate $\mu$ while they are at least as good as their random i.i.d.~Gaussian counterparts. In addition, in Figure \ref{fig:iidVsDFT}, markers show the
lowest MSE with respect to the regularization parameter $\lambda$. As can be seen, the optimal value of $\lambda$ depends on the matrix ensemble though not so
sensitive.

\begin{figure}
\begin{center}
\resizebox{4.5in}{!}{%
\includegraphics*{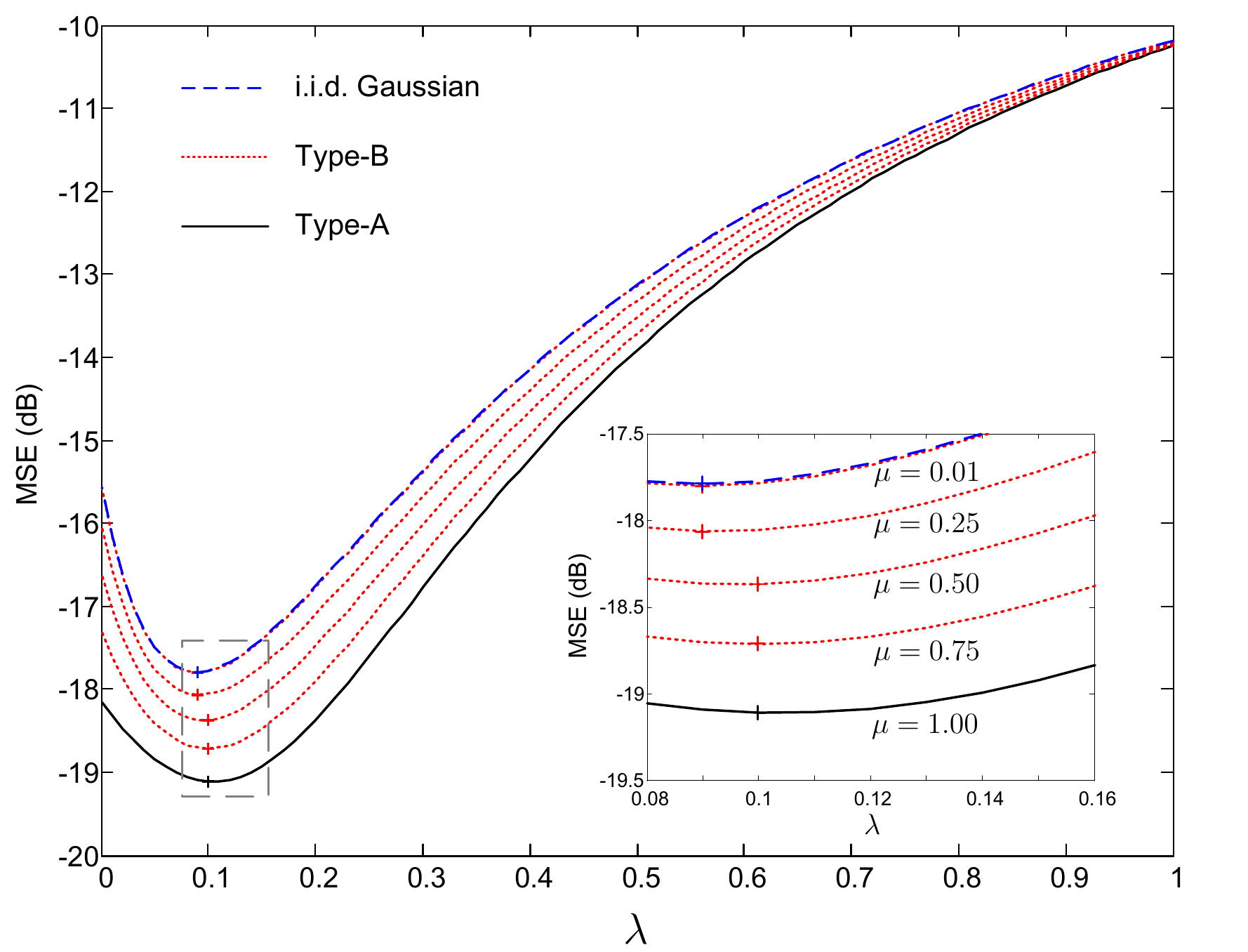} }%
\caption{Average MSE against the regularization parameter $\lambda$ for $\alpha=0.5$, $\sigma_0^2 = 10^{-2}$, and $\rho_x=0.15$. The dashed line is the MSE of the i.i.d.~Gaussian setup, the solid line is the MSE of Type-A setup, and the dotted lines are the MSE of Type-B setup with the selection rate $\mu = 0.75, 0.5, 0.25, 0.01$ (from bottom to top). Markers correspond to the lowest MSE with respect to the regularization parameter.}\label{fig:iidVsDFT}
\end{center}
\end{figure}

In \cite{Oymak-14ISIT}, it was argued that for \emph{noisy} measurements (\ref{eq:sysModel}), singular value distribution of the measurement matrix plays a key role in LASSO reconstruction, and measurement matrices with similar singular value characteristics exhibit similar MSE behavior. Together with this argument and Figure \ref{fig:iidVsDFT}, we may infer that the singular value distribution of Type-B matrix with \emph{small} column selection rate will be similar to that of the i.i.d.~Gaussian matrix. To demonstrate this aspect, the following theorem is useful.

\begin{Theorem} \label{Th1}
In the Type-B setup, the matrix $\qA$ is constructed by randomly selecting $\oM$ rows and $\oN$ columns from the standard orthonormal matrix multiplied by $\sqrt{R}$. Denote the row and column selection rate by $\nu = \oM/N$ and $\mu = \oN/N$, respectively. Then the empirical distribution of the $\min\{ \oM, \oN \}$ largest eigenvalues of $\qA\qA^H$ converges almost surely to
\begin{equation} \label{eq:HaarDensity}
f_{\sf Haar}(x) = \frac{\sqrt{\left( x-a_{-} \right)_{+} \left( a_{+}-x \right)_{+} }}{2 \pi x (R-x) (1-\max\{1-\nu,1-\mu \})} {\sf I}_{\{ a_{-}\leq x \leq a_{+} \}}
+ \frac{(\nu+\mu-1)_{+}}{1-\max\{1-\nu,1-\mu \}} \delta(x-R),
\end{equation}
where
\begin{subequations}
\begin{align}
    a_{+} &= R \left( \sqrt{(1-\mu)\nu} + \sqrt{(1-\nu)\mu} \right)^2, \\
    a_{-} &= R \left( \sqrt{(1-\mu)\nu} - \sqrt{(1-\nu)\mu} \right)^2.
\end{align}
\end{subequations}
\end{Theorem}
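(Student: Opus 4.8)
The plan is to recognise \eqref{eq:HaarDensity} as the limiting spectral law of the ``angle'' between two independent uniformly random subspaces, obtained by a free-probability computation followed by Stieltjes inversion. First I would write the Type-B matrix as $\qA=\sqrt{R}\,\qE_1\qU\qE_2$, where $\qU$ is Haar distributed on the unitary group $U(N)$, $\qE_1\in\bbC^{\oM\times N}$ selects $\oM$ rows (so $\qE_1\qE_1^H=\qI_{\oM}$) and $\qE_2\in\bbC^{N\times\oN}$ selects $\oN$ columns (so $\qE_2^H\qE_2=\qI_{\oN}$), and set $\qP=\qE_2\qE_2^H$ and $\qQ=\qE_1^H\qE_1$, fixed orthogonal projections of ranks $\oN$ and $\oM$. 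Since nonzero eigenvalues are invariant under cyclic rearrangement, the strictly positive eigenvalues of $\qA\qA^H$ coincide with $R$ times those of the $N\times N$ matrix $\qB:=\qQ\,(\qU\qP\qU^H)\,\qQ$; generically there are $\min\{\oM,\oN\}$ of them, so the $\min\{\oM,\oN\}$ largest eigenvalues of $\qA\qA^H$ are exactly the positive eigenvalues of $R\qB$. It therefore suffices to find the limiting spectral distribution of $\qB$, dilate it by $R$, restrict it to $(0,\infty)$, and renormalise to a probability measure.

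Because $\qU$ is Haar, the deterministic projection $\qQ$ and the randomly rotated projection $\qU\qP\qU^H$ are asymptotically free (Voiculescu), with limiting traces $\nu$ and $\mu$; by strong asymptotic freeness the empirical spectral distribution of $\qB$ converges almost surely to the law $\Lambda$ of $qpq$, where $p,q$ are free projections in a tracial $W^*$-probability space with $\tau(p)=\mu$ and $\tau(q)=\nu$. A projection of trace $a$ has $S$-transform $S(z)=(1+z)/(a+z)$, and $qpq$ has the same moments as $pq$, so multiplicativity of the $S$-transform gives $S_{qpq}(z)=\frac{(1+z)^2}{(\mu+z)(\nu+z)}$. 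Converting to the Cauchy transform $G$ of $\Lambda$ through $\chi(z)=\tfrac{z}{1+z}S(z)$ and $G(w)=\tfrac1w\big(1+\psi(1/w)\big)$ with $\psi=\chi^{-1}$, I would arrive at the parametric representation $w(z)=\dfrac{(\mu+z)(\nu+z)}{z(1+z)}$ and $G\big(w(z)\big)=\dfrac{1+z}{w(z)}$.

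Stieltjes inversion of this parametrisation yields the continuous part of $\Lambda$: the band edges are the critical values of $w$, i.e.\ the images under $w$ of the two real roots of $w'(z)=0$, which work out to $b_\pm=\big(\sqrt{(1-\mu)\nu}\pm\sqrt{(1-\nu)\mu}\big)^2$, while $-\tfrac1\pi\Im G$ on the cut gives a density proportional to $\sqrt{(x-b_-)_+(b_+-x)_+}\big/\big(x(1-x)\big)$. The poles of $w$ at $z=0$ and $z=-1$ give atoms of $\Lambda$ at $0$ and $1$; I would identify the mass at $1$ with $\tau(p\wedge q)=(\mu+\nu-1)_+$ and cross-check it directly, since $\qB$ has eigenvalue $1$ with multiplicity $\dim\!\big(\mathrm{range}\,\qQ\cap\mathrm{range}(\qU\qP\qU^H)\big)$, which concentrates on $(\oM+\oN-N)_+$. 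Likewise the total mass of $\Lambda$ on $(0,\infty)$ equals $\min\{\mu,\nu\}=1-\max\{1-\nu,1-\mu\}$, the complementary atom at $0$ arising from $\ker q$ together with the rank deficit of $qpq$ inside $\mathrm{range}\,q$. Dilating by $R$ (so $b_\pm\mapsto a_\pm=Rb_\pm$, the continuous density acquiring $(R-x)$ in place of $(1-x)$ after the change of variables) and dividing by $\min\{\mu,\nu\}$ reproduces \eqref{eq:HaarDensity} term by term; two checks I would run are $\mu=1$, which collapses $a_-=a_+=R(1-\nu)$ and forces an atom of mass $1$ at $R$, consistent with $\qA\qA^H=R\qI_{\oM}$ for Type-A, and $\mu=\nu=\tfrac12$, $R=1$, which gives the arcsine law on $[0,1]$.

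The conceptual steps are routine; the real work sits in the third step: choosing the correct analytic branch in the Stieltjes inversion, verifying in closed form that the real critical points of $w$ give precisely the stated $a_\pm$, and — most delicately — upgrading convergence of free moments to almost-sure \emph{weak} convergence of the empirical measure with the exact atom masses, which the moment calculation alone does not pin down and which I would instead secure via the rank/intersection-dimension counts above together with standard concentration for Haar matrices. A useful shortcut is to note that the corner $\qB$ is exactly a (limiting) Jacobi/MANOVA ensemble, whose spectral density is classical, so one may alternatively cite that law and merely match the normalising constant and the $R$-scaling.
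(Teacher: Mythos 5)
Your proposal is correct in substance, but it takes a genuinely different route from the paper: the paper's entire proof of Theorem \ref{Th1} is a citation of Farrell's Theorem 3.1 on the limiting singular value distribution of row/column restrictions of DFT matrices, followed by a translation of parameters (the gain $R$, the selection rates $\mu,\nu$, and the renormalization to the $\min\{\oM,\oN\}$ largest eigenvalues), whereas you rebuild the limit law from scratch via free probability: reduce $\qA\qA^H$ to $R\,\qQ\qU\qP\qU^H\qQ$ with two projections, invoke asymptotic freeness of a fixed and a Haar-rotated projection, compute the $S$-transform of the free product, Stieltjes-invert to get the MANOVA/Jacobi density with edges $b_\pm=(\sqrt{(1-\mu)\nu}\pm\sqrt{(1-\nu)\mu})^2$, and pin the atoms at $0$ and $1$ by intersection-dimension counts before dilating by $R$ and conditioning on the top $\min\{\mu,\nu\}N$ eigenvalues. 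What your approach buys is a self-contained argument that applies verbatim to the Haar-distributed ``standard orthonormal matrix'' the theorem actually speaks about (the paper's reference treats the deterministic DFT with random restrictions, so the two results are complementary realizations of the same free-projection law); what the paper's approach buys is brevity and, via Farrell, coverage of the structured DFT/DCT case actually used in the numerical experiments. Your own closing ``shortcut''---citing the classical Jacobi/MANOVA spectral law and matching the $R$-scaling and normalization---is essentially what the paper does. Two small caveats if you were to write your version out: the edge computation and the branch choice in the Stieltjes inversion are asserted rather than performed, and the passage from almost-sure weak convergence of the full ESD to convergence of the \emph{restricted, renormalized} measure (including the atom of mass $(\nu+\mu-1)_+/\min\{\mu,\nu\}$ at $R$) needs exactly the rank and intersection-dimension bookkeeping you sketch, since moment convergence alone does not fix atom masses; with those details filled in, the argument is sound.
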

\begin{proof}
This theorem can be easily worked out from \cite[Theorem 3.1]{Farrell-11JFAA} by carefully substituting the setup of Type-B, which completes the proof.
\end{proof}

\begin{figure}
\begin{center}
\resizebox{5.5in}{!}{%
\includegraphics*{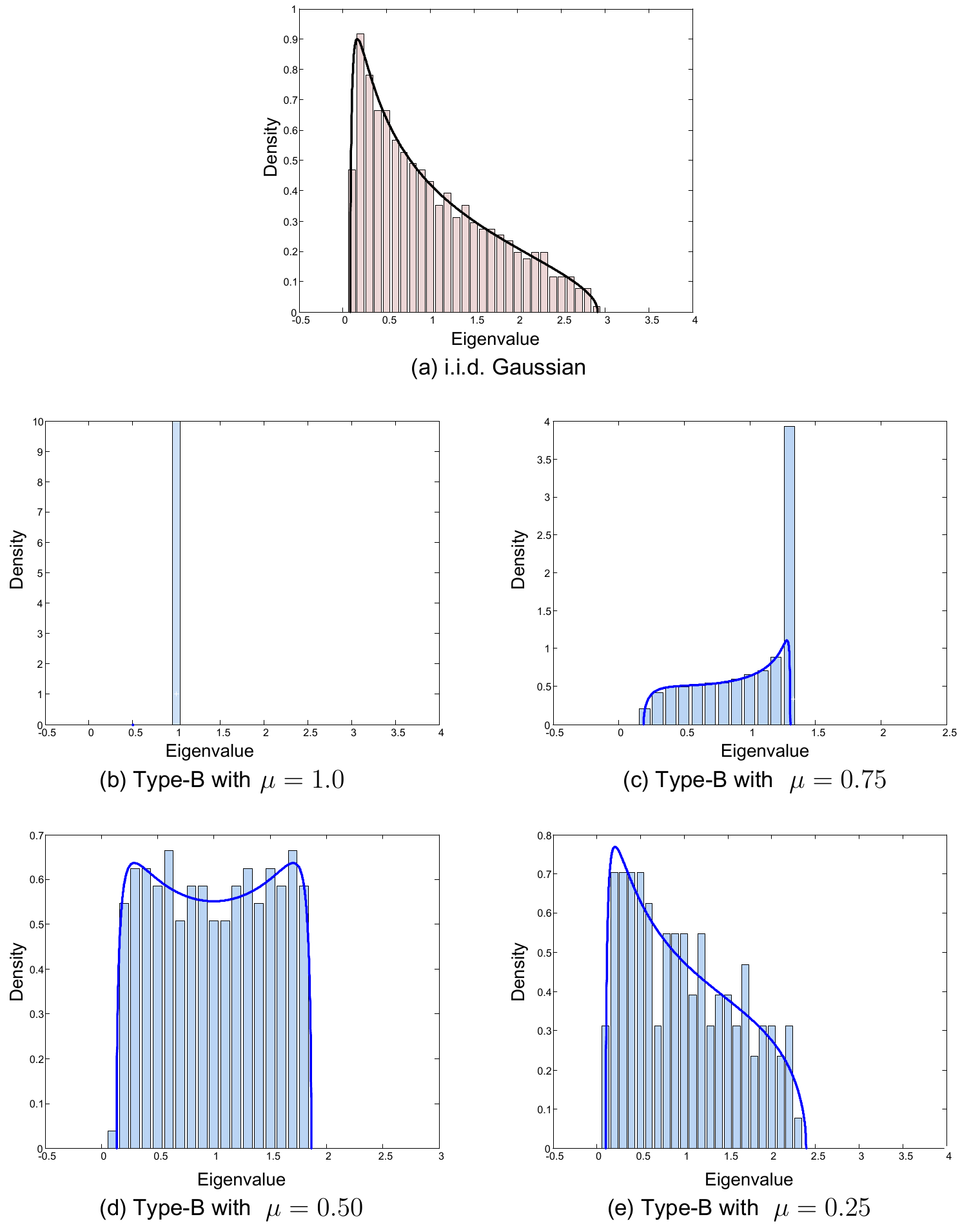} }%
\caption{Empirical eigenvalue distribution for one realization plotted against continuous part for asymptotic distribution. In (a), the i.i.d.~Gaussian matrix has dimensions $512 \times 1024$. In (b)--(e), the original DFT matrix has dimensions $1024 \times 1024$.}\label{fig:eig1}
\end{center}
\end{figure}

Before discussions, let us demonstrate the empirical distribution of the i.i.d.~Gaussian matrix. Recall (from the beginning of this subsection) that a matrix $\qA\in \bbC^{\oM \times \oN}$ is called the i.i.d.~Gaussian matrix if the elements of $\qA$ are i.i.d.~Gaussian random variables with zero mean and variance $1/\oN$. The empirical distribution of $\qA\qA^H$ converges  almost surely to the well-known Mar\v{c}enko-Pastur distribution \cite{Bai-10}:
\begin{equation} \label{eq:GaussianDensity}
f_{\sf Gaussian}(x) = \frac{\sqrt{\left( x-b_{-} \right)_{+} \left( b_{+}-x \right)_{+} }}{2 \pi x \alpha} {\sf I}_{\{ b_{-}\leq x \leq b_{+} \}}
 + (1-\alpha^{-1})_{+} \delta(x),
\end{equation}
where $b_{\pm} = (1 \pm \sqrt{\alpha})^2$ and $\alpha = \oM/\oN$.

In Figure \ref{fig:eig1}, we compare the eigenvalue distributions of the i.i.d.~Gaussian matrix and Type-B matrices using the similar setups in Figure \ref{fig:iidVsDFT}. The continuous part indicates the asymptotic distribution while the histogram represents the empirical eigenvalues of $\qA\qA^H$. Recall that the measurement ratio is fixed to $\alpha= 0.5$. In addition, to satisfy the power constraint of the measurement matrix, we have set the gain factor $R = 1/\mu$. We see in Figure \ref{fig:eig1}(b) that the eigenvalues of Type-A matrix (or Type-B matrix with $\mu = 1.0$) are all located at one as expected. Next, the eigenvalues are diverged from one if $\mu < 1$. For some values of $\mu$, e.g., $0.5 < \mu < 1$ and see Figure \ref{fig:eig1}(c), the peak of the probability density appears close to the largest eigenvalue $a_{+}$. For other values of $\mu$, e.g., $\mu < 0.5$ and see Figure \ref{fig:eig1}(e), the eigenvalues start to cluster near zero, and the probability density looks like that of the i.i.d.~Gaussian matrix.

In Figure \ref{fig:eig2}, we plot the asymptotic eigenvalue distributions of the i.i.d.~Gaussian matrix and Type-B matrices for small values of $\mu$. As expected, the asymptotic eigenvalue distributions of the i.i.d.~Gaussian matrix and Type-B matrix with $\mu = 0.01$ are almost indistinguishable. In fact, this observation can be obtained from Theorem \ref{Th1}. To do so, in particular, we substitute $\mu, \nu \rightarrow 0$ with $\nu/\mu = \alpha \in (0,~1]$ and $R=1/\mu$ into (\ref{eq:HaarDensity}). Then the second term of (\ref{eq:HaarDensity}) is removed because $\mu + \nu < 1$. In addition, we have $(R-x) (1-\max\{1-\nu,1-\mu \}) \rightarrow \alpha$ and $a_{\pm} \rightarrow (1 \pm \sqrt{\alpha})^2$. Clearly, $f_{\sf Haar}(x)$ converges to $f_{\sf Gaussian}(x)$. We thus bridge a way to meet the i.i.d.~Gaussian matrix from a Haar measure of random matrix. Our result further supports the argument of \cite{Oymak-14ISIT} that for the noisy setup (\ref{eq:sysModel}), measurement matrices with similar singular value characteristics exhibit similar MSE behavior in LASSO formulation. From Figures \ref{fig:iidVsDFT} and \ref{fig:eig1}, one can realize that the divergence of the eigenvalues also results in performance degeneration.

\begin{figure}
\begin{center}
\resizebox{4.5in}{!}{%
\includegraphics*{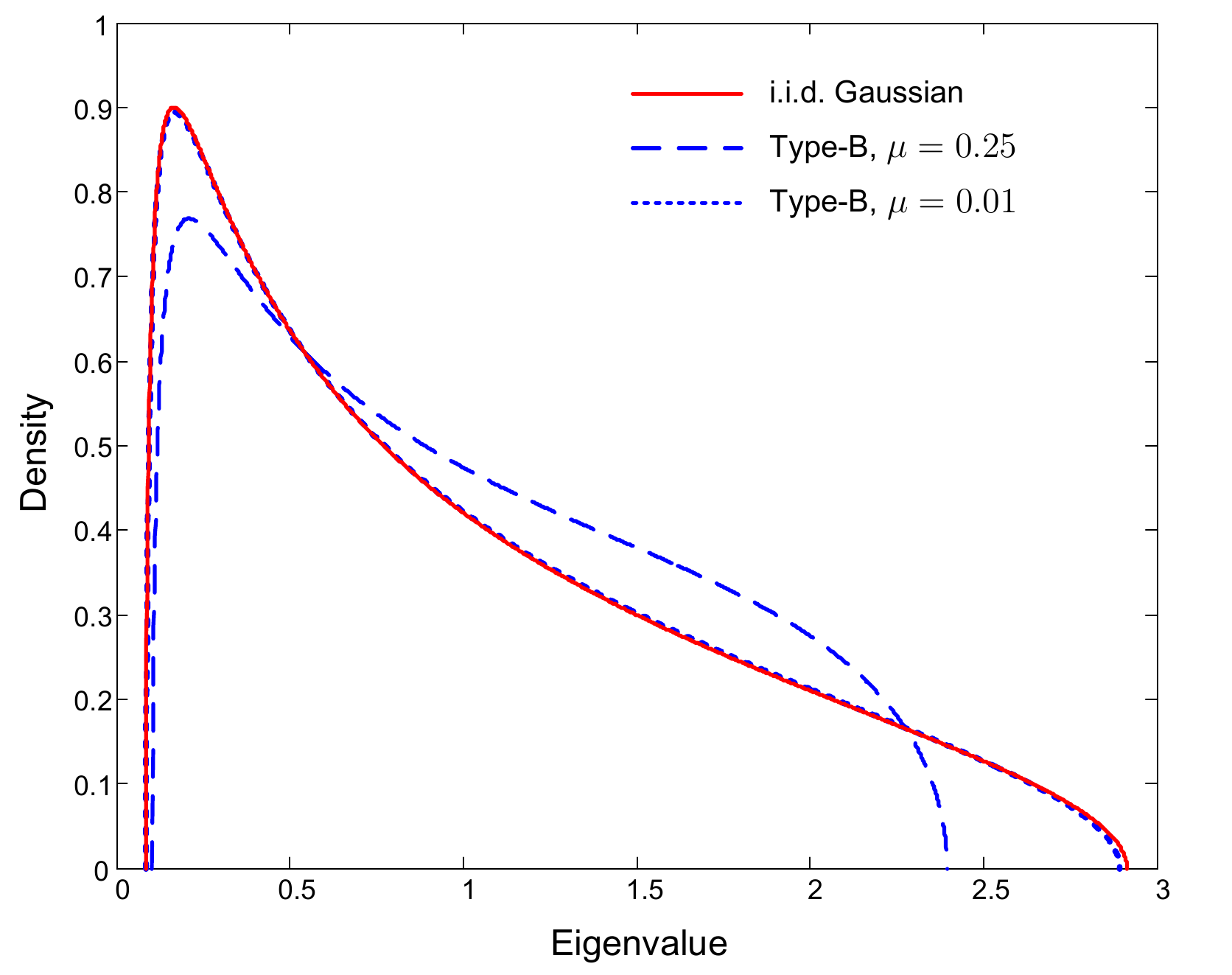} }%
\caption{The Mar\v{c}enko-Pastur distribution and the asymptotic eigenvalue distribution of Type-B for $\mu=0.25, 0.01$.}\label{fig:eig2}
\end{center}
\end{figure}

\subsection*{B. Type-C Matrix}
Next, we study the MSE of LASSO under Type-C measurement matrices. Recall that Type-C matrices are attractive due to implementation considerations \cite{Do-08ICASSP,Fowler-12FtSP}. For example, to recover a $0.75 \times 4096 = 3072$ sparse signals with $0.5 \times 4096 =2048$ measurements by using the DFT operators, we have at least four approaches: 1) Type-B measurement matrix by selecting $3072$ columns and $2048$ rows from the $4096 \times 4096$ DFT matrix; 2) Type-C.2 measurement matrix concatenated by two matrices with $\qA_{1,1},\qA_{1,2} \in \bbC^{2048 \times 1536}$. The two matrices are taken from partial scrambled $2048 \times 2048$ DFT matrices with the column selection rate $\mu = 0.75$; 3) Type-C.3 measurement matrix concatenated by two matrices with $\qA_{1,1} \in \bbC^{2048 \times 2048}$ and $\qA_{1,2} \in \bbC^{2048 \times 1024}$. The two matrices are taken from randomly scrambled $2048 \times 2048$ DFT matrices, and the additional column selection with rate $\mu_{1,2} = 0.5$ is used to get $\qA_{1,2}$; 4) Type-C.4 measurement matrix concatenated by six randomly scrambled $1024 \times 1024$ DFT matrices, namely $\qA_{1,1}, \qA_{1,2}, \qA_{1,3}, \qA_{2,1}, \qA_{2,2}, \qA_{2,3} \in \bbC^{1024 \times 1024}$. In contrast to the Type-B setup, the implementations of Type-C.2, Type-C.3, and Type-C.4 setups can exploit parallelism or distributed computation, wherein the Type-C.4 setup has the best structure for parallel  computations. Therefore, a naturally arising question is how their MSE performances are effected among the different measurement matrices.

\begin{figure}
\begin{center}
\resizebox{4.5in}{!}{%
\includegraphics*{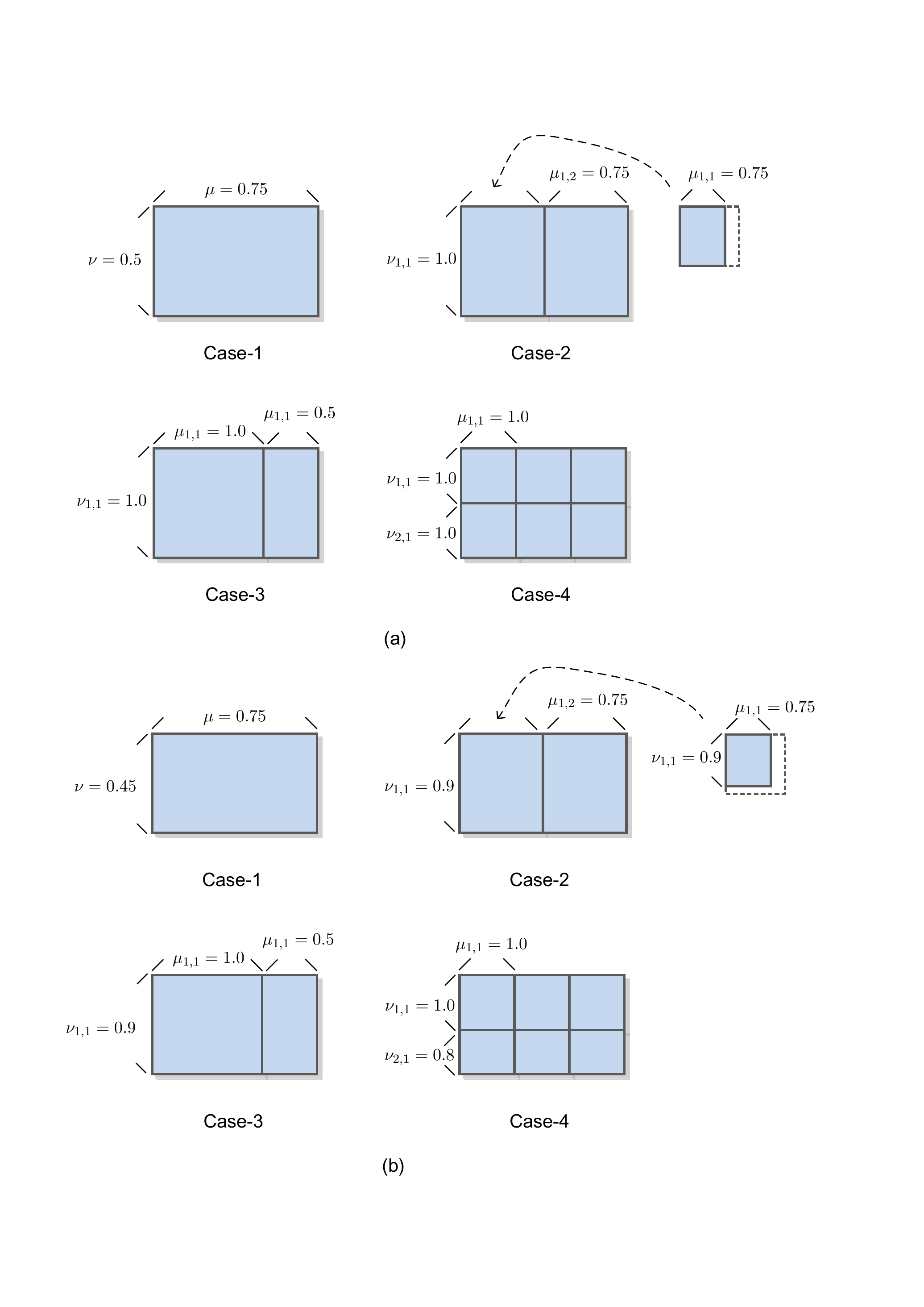} }%
\caption{The four examples of constructing measurement matrices with (a) $\oM = 0.5 \times 2^{15}$ and $\oN = 0.75 \times 2^{15}$, and with (b)  $\oM = 0.45 \times 2^{15}$ and $\oN = 0.75 \times 2^{15}$. Each block is obtained from an independent scrambled DFT matrix. The original DFT matrices of the four cases have dimensions $N=2^{15}$, $N=2^{14}$, $N=2^{14}$, and $N=2^{13}$, respectively.}\label{fig:FourCases}
\end{center}
\end{figure}

We also conducted extensive numerical experiments to verify the theoretical results in Proposition \ref{Pro1}. In the experiments, we use the example of the four cases mentioned above while enlarging the dimensions proportionally so that $\oM = 0.5 \times 2^{15}$ and $\oN = 0.75 \times 2^{15}$. For clarity, we depict the four cases in Figure \ref{fig:FourCases}(a). The experimental average MSEs of LASSO under the four measurement matrices are listed in Table \ref{tab:fourCases} in which the theoretical MSE estimates by Proposition \ref{Pro1} are also listed for comparison. In the same table, we repeat the previous experiment but using different measurement rate with $\oM = 0.45 \times 2^{15}$ and $\oN = 0.75 \times 2^{15}$. The corresponding four cases are depicted in Figure \ref{fig:FourCases}(b). As can be seen, for all cases in the tables, the differences between the two estimates are only evident in the last digits. Therefore, we confirm that Proposition \ref{Pro1} provides an excellent estimate of the MSE of LASSO in large systems.

\begin{table}
\caption{Comparison between the experimental and theoretical MSEs of LASSO under the four measurement matrices for $\lambda=0.1$, $\sigma_0^2 = 10^{-2}$, and $\rho_x=0.15$.}
\begin{center}
\begin{tabular}{l|l}
\toprule
 &
\begin{tabular}{lllll}
  \hspace{2.80cm} & \,Case-1 & \,Case-2 & ~Case-3 & ~Case-4
\end{tabular}  \\
\bottomrule
 (a) $\oM/\oN = 0.5/0.75$ &
\begin{tabular}{lcccc}
 Theory (dB) & $-21.09$ & $-21.09$ & $-21.10$  & $-20.72$  \\
\hline
 Experiment (dB) &  $-21.09$ & $-21.09$ & $-21.11$ & $-20.72$
\end{tabular}
\\
\bottomrule
 \toprule
 (b) $\oM/\oN = 0.45/0.75$ &
\begin{tabular}{lcccc}
 Theory (dB) &  $-20.32$ & $-20.32$ & $-20.36$ & $-19.96$ \\
\hline
 Experiment (dB) & $-20.32$ & $-20.32$ & $-20.36$  & $-19.96$
\end{tabular} \\
\bottomrule
\end{tabular}

\end{center}\label{tab:fourCases}
\end{table}

Next, we use the theoretical expression to examine the behaviors of MSEs under Type-C measurement matrices. In Figure \ref{fig:MSE-FourCases}, we compare the MSEs of LASSO as a function of the regularization parameter $\lambda$ for the four cases depicted in Figure \ref{fig:FourCases}. The MSEs for Type-A and the i.i.d.~Gaussian counterparts are also plotted as references. As can be seen, Type-A setup always gives the best MSE result while the i.i.d.~Gaussian setup yields the worst MSE result. However, Type-A setup would not be always useful if the corresponding size of the FFT operators is not available in some DSP chips. Also, we see that Case-1 and Case-2 always have the same MSE behaviors. This finding motivates us to get the following observation that can meet the same performance of Type-B matrix via concatenating orthonormal bases.

\begin{figure}
\begin{center}
\resizebox{7.0in}{!}{%
\includegraphics*{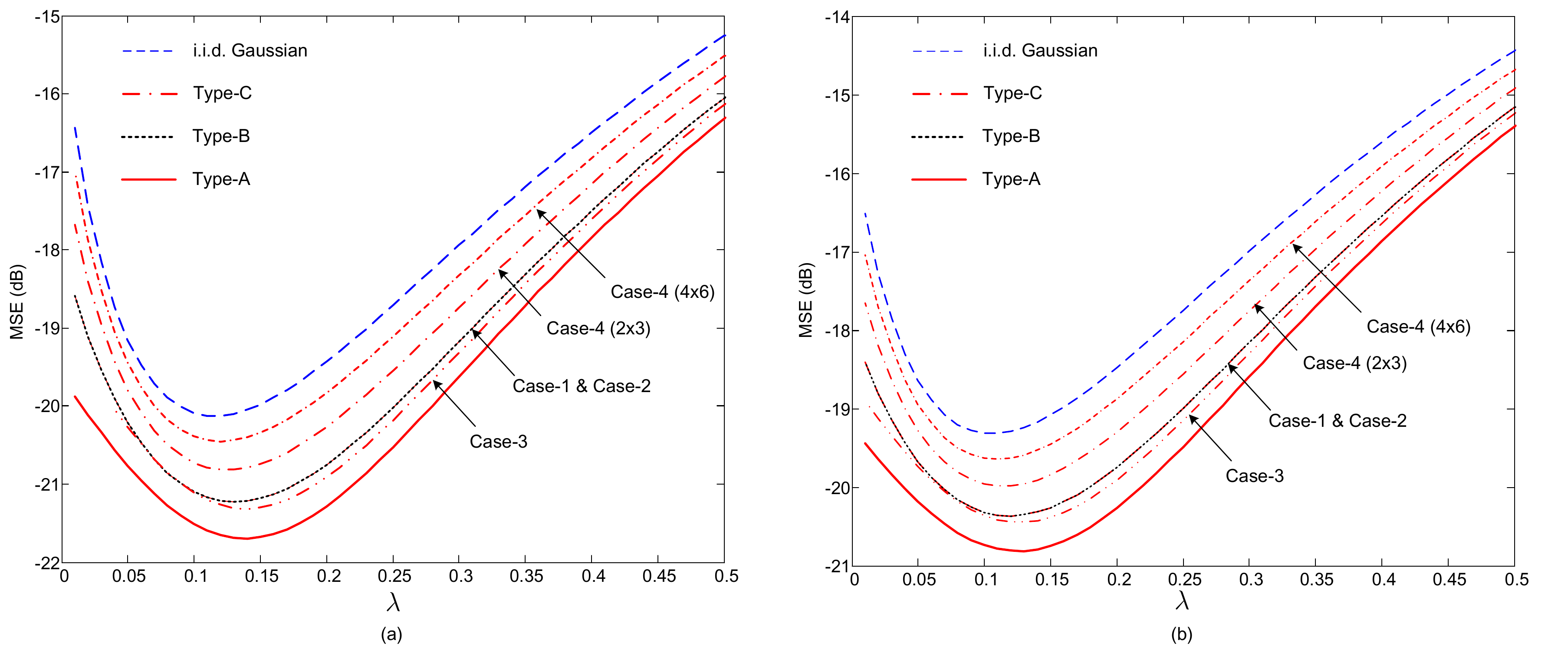} }%
\caption{Average MSE against the regularization parameter $\lambda$ for $\sigma_0^2 = 10^{-2}$ and $\rho_x=0.15$, and that (a) $\alpha = 0.5/0.75$ and (b) $\alpha = 0.45/0.75$. The dashed line is the MSE of the i.i.d.~Gaussian setup, the solid line is the MSE of Type-A setup, and the dotted lines are the MSE of Type-B setup with the selection rate $\mu = 0.75, 0.5, 0.25, 0.01$ (from bottom to top). Markers correspond to the lowest MSE with respect to the regularization parameter.}\label{fig:MSE-FourCases}
\end{center}
\end{figure}

\begin{Observation} \label{obs1}
Consider Type-B measurement matrix with the column and row selection rates $\mu$ and $\nu$. The MSE of LASSO under this measurement matrix is identical to that under the horizontal concatenation of $L_c$ matrices where each matrix is from a partial orthonormal matrix with the column and row selection rates $\mu$ and $L_c\nu$. For a meaningful construction, $L_c$ should be subjected to $L_c\nu \leq 1$.
\end{Observation}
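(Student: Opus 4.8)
\noindent\emph{Proof proposal.} The plan is to view the horizontally concatenated ensemble as a Type-C matrix, specialize Proposition~\ref{Pro1} to it, and show that the resulting fixed-point system collapses --- by a permutation-symmetry argument over the column blocks --- onto the Type-B system of Corollary~\ref{Cor1}. Concretely, the concatenation has a single row block, $L_r=1$, so the index $q$ is dropped and the block row rate equals $L_c\nu$ (the restriction $L_c\nu\le1$ being exactly what makes each block a genuine partial orthonormal matrix); it has $L_c$ column blocks $\qA_{1,p}$ that are statistically identical, each with column-selection rate equal to the Type-B $\mu$ and a common gain $R_{1,p}=R$ fixed by the power constraint $\aang{\tr(\qA\qA^H)}_{\qA}=\oM$, just as $R_{1,1}=1/\mu$ is fixed in the Type-B case. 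Since every column block enters (\ref{eq:def_Delta})--(\ref{eq:mQx_Pro1}) through identical data and all of those equations are invariant under permutations of $p$, one looks for the permutation-symmetric solution in which $\Delta_{1,p}=\Delta$, $\Gamma^{\star}_{1,p}=\Gamma^{\star}$, $\chi_p=\chi$, $\hm_p=\hm$, $\hchi_p=\hchi$ and $\mse_p$ are all independent of $p$; exchangeability of the blocks makes this the physically relevant branch, in parallel with the argument used to select the replica-symmetric saddle in Appendix~A.

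Under this ansatz every sum $\sum_{l=1}^{L_c}(\cdot)$ in (\ref{eq:def_Delta}), (\ref{eq:hm_Pro1}) and (\ref{eq:mQx_Pro1}) becomes a multiplication by $L_c$, and the coupled equations reduce to a closed system in $(\hm,\Gamma^{\star},\Gamma',\chi,\hchi)$. The first steps are routine: eliminate $\Delta$ via (\ref{eq:mQx_Pro1a}) (so $\Delta=1-\chi\Gamma^{\star}$), substitute the block row rate $L_c\nu$, carry the $L_c$-fold sums through (\ref{eq:hm_Pro1}) and through the $\chi$-equation, and check directly that the resulting relations for $\hm$, $\Gamma^{\star}$ and $\chi$ coincide with (\ref{eq:par1_TypeB})--(\ref{eq:par2_TypeB}) once the gain factors are made explicit and combine to match those of (\ref{eq:par2_TypeB}). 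Finally, because $\mse$ is a continuous function of this solution and $\mse=\sum_{p}\mse_p=L_c\,\mse_1$, one observes from $\mse_p=(\mu_p\rho_x-2m_p+Q_p)/(\sum_l\mu_l)$ with $\mu_p=\mu$ and $\sum_l\mu_l=L_c\mu$ that the $L_c$ factors cancel, so $\mse$ equals the same scalar expression $\rho_x-2m+Q$ (in $\hm,\hchi$) as in Corollary~\ref{Cor1}; since the two fixed-point systems are identical, $\hm$, $\hchi$, and hence $\mse$, agree.

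The main obstacle I expect is the $\Gamma'$ term of (\ref{eq:mQx_Pro1}): unlike the other relations, $\Gamma'_{q,p,r}$ mixes a ratio carrying $(1-2\Delta_{q,l})$ factors, a global sum over $l$, and a Kronecker-delta subtraction, so it does not visibly collapse. One must evaluate $\sum_{r}(\mse_r-\sigma_0^2\chi_r/\lambda)\Gamma'_{1,p,r}$ over the $L_c$ symmetric blocks --- splitting the diagonal $r=p$ piece from the $L_c-1$ equal off-diagonal pieces --- and then simplify the resulting rational expression to recognize exactly the $\Gamma'$ of (\ref{eq:par2_TypeB}); the extra diagonal contributions produced by this summation are precisely what reconcile the $\mse_1/\chi^2$ appearing in Proposition~\ref{Pro1} with the $\mse/\chi^2$ appearing in Corollary~\ref{Cor1}. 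A secondary, more conceptual point, already flagged, is to justify that no symmetry-broken fixed point is relevant here; the power-constraint bookkeeping that lines the $R$'s up with the Type-B $1/\mu$ is routine but must be tracked carefully so the identification with (\ref{eq:par2_TypeB}) is exact.
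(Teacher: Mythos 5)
You should first be aware that the paper contains no analytic proof of Observation~\ref{obs1}: it is stated as an observation motivated by the numerical coincidence of Case-1 and Case-2 (Table~\ref{tab:fourCases}, Figure~\ref{fig:MSE-FourCases}), i.e., by evaluating Proposition~\ref{Pro1} for the two ensembles. Your plan --- specialize Proposition~\ref{Pro1} to $L_r=1$, $\mu_p=\mu$, $\nu_1=L_c\nu$, impose the permutation-symmetric fixed point, and collapse onto Corollary~\ref{Cor1} --- is therefore a genuinely different and more ambitious route, and its skeleton is sound. In particular, the step you flag as the main obstacle does work out: with the power-constraint gains $R_{1,p}=1/(L_c\mu)$ for the concatenation versus $R=1/\mu$ for Type-B, one has $\nu_1 R_{1,p}=\nu R$ and $\sum_{l}R_{1,l}=R$, so $\Delta$, $\hm$, $\Gamma^{\star}$ and $\chi$ take the Type-B form; and in $\sum_{r}\Gamma'_{1,p,r}$ the prefactor $1/\nu_1=1/(L_c\nu)$ and the inner sum over $l$ each absorb exactly one factor of $L_c$, so the sum reproduces the single-block $\Gamma'$ of (\ref{eq:par2_TypeB}) verbatim.

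The genuine gap is in your treatment of the $\mse$ terms in (\ref{eq:mQx_Pro1}). In Proposition~\ref{Pro1}, $\mse_r=(\mu_r\rho_x-2m_r+Q_r)/\mu$ is normalized by the \emph{total} column rate, which is $L_c\mu$ for the concatenated ensemble but $\mu$ for Type-B; under your symmetric ansatz the concatenated $\hchi$-equation then carries $\mse^{\rm B}/L_c$ wherever the Type-B equation carries $\mse^{\rm B}$, leaving a residual $\left(\tfrac{1}{L_c}-1\right)\mse^{\rm B}\left(\Gamma'+\tfrac{1}{\chi^{2}}\right)$ that is \emph{not} cancelled by the Kronecker-delta part of the $\Gamma'$ summation; your stated expectation that the extra diagonal contributions reconcile $\mse_1/\chi^2$ with $\mse/\chi^2$ is the one step that fails if Proposition~\ref{Pro1} is read literally. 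The repair is purely a normalization one: the quantity that actually enters the saddle-point equation for $\hchi_{q,p}$ is the per-block unnormalized error $\mu_p\rho_x-2m_p+Q_p$ (see (\ref{eq:defMMSE_qp_Beta}) and (\ref{eq:sdPoint_Beta-2}) in Appendix A), which is identical for the two ensembles because $\mu_p=\mu$ in both; division by the total column rate belongs only to the final MSE report. So either run the collapse on the Appendix-level equations or first restate the $\hchi$-equation of Proposition~\ref{Pro1} with the unnormalized per-block quantity (applying the same care to Corollary~\ref{Cor1}, whose $\mse=\rho_x-2m+Q$ convention has an analogous $1/\mu$ ambiguity relative to Proposition~\ref{Pro1}). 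With that correction your argument goes through, and the final cancellation $\mse=\sum_p\mse_p$ is as you describe; the remaining caveat that only the permutation-symmetric saddle is considered is acceptable at the level of rigor of the replica-symmetric analysis used throughout the paper.
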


To see an application of this observation, let us take two examples. First, consider Type-B measurement matrix with $\mu = 1.0$ and $\nu = 0.25$. Applying Observation \ref{obs1}, we have that the MSE of LASSO under the row-orthonormal measurement matrix is identical to that under the measurement matrix of $\qA =[\qA_{1,1} \,\qA_{1,2} \,\qA_{1,3} \,\qA_{1,4}]$ with each $\qA_{1,p}$ being a square orthogonal matrix. This argument was also revealed by \cite{Vehkapera-arXiv13}. It is noted that the columns of each $\qA_{1,p}$ are orthogonal so that there is no interference within each square orthogonal matrix. The interference resulting from the other sub-block of the measurement matrix will degenerate the MSE performance. Therefore, we can infer that the matrix constructed by concatenating many square orthonormal matrices should lose its advantage over the i.i.d.~Gaussian matrix. In other words, if the measurement ratio is small, i.e., $\oM \ll \oN$, the MSEs of LASSO under the row-orthonormal measurement matrix and the i.i.d.~Gaussian matrix should be comparable. This inference also seems reasonable from the aspect of eigenvalue spectrum \cite{Oymak-14ISIT} that: when $\oM \ll \oN$, an i.i.d.~Gaussian matrix has approximately orthogonal rows and it behaves similar to a row-orthonormal matrix.

Next, let us consider another example that Type-B measurement matrix is with $\mu = 0.6$ and $\nu = 0.3$. With Observation \ref{obs1}, we have that the MSE of LASSO under this measurement matrix is identical to that under the measurement matrix of $\qA = [\qA_{1,1} \,\qA_{1,2} \,\qA_{1,3}]$ with each $\qA_{1,p}$ being a partial orthogonal matrix with $\mu_{1,p} = 0.6$ and $\nu_{1,p} = 0.9$. In this case, the columns of each $\qA_{1,p}$ are not orthogonal any more but \emph{nearly} orthogonal. Therefore, we can expect some performance degeneration under this measurement matrix. Finally, it is clear that Case-1 and Case-2 in Figure \ref{fig:MSE-FourCases} have the same MSE behaviors but Case-2 has a better structure in the parallel computation and less requirement in the size of the FFT operator.

From Figure \ref{fig:MSE-FourCases}, we also observe that the measurement matrix constructed by vertical \emph{and} horizontal concatenation of several blocks,
i.e., Case-4, has the worst performance among the structurally orthogonal matrices. As a matter of fact, if we continue to increase the number of concatenation
blocks, then their MSE performances will degrade accordingly. However, in any cases, they are at least as good as their random i.i.d.~Gaussian counterparts. This
observation hence provides us another way to meet the random i.i.d.~Gaussian matrix via vertically and horizontally concatenating orthonormal bases.

Finally, comparing the four cases in Figure \ref{fig:MSE-FourCases}, we notice that if Type-A matrix is not available, Case-3 provides the best MSE result. This
observation together with the previous experiments indicate that to construct a measurement matrix aiming for a good MSE performance in LASSO formulation, one
should follow the example of Case-3. That is to say, first try to use a row-orthogonal matrix that can best fit the dimension of the measurement matrix and then
horizontally concatenate the remaining part.

\section*{V. Conclusion}
We have investigated the MSE performance of estimating a sparse vector through an undersampled set of noisy linear transformations when the measurement matrix is constructed by concatenating several randomly chosen orthonormal bases and LASSO formulation is adopted. Using the replica method in conjunction with some novel matrix integration results, we derived the theoretical MSE result. Extensive numerical experiments have illustrated excellent agreement with the theoretical result. Our numerical results also revealed the fact that the structurally orthogonal matrices are at least as well performed as the i.i.d.~Gaussian matrices. In particular, we have made the following observations:
\begin{itemize}
\item Type-A matrices (or row-orthogonal matrices) have the best MSE performance out of all the other types of structurally orthogonal matrices and is significantly better than the i.i.d.~Gaussian matrices.
\item The advantage of the row-orthogonal matrix over the i.i.d.~Gaussian matrix is still preserved even when a random set of columns is removed (which leads to a Type-B matrix). When increasing the number of the removed columns, the MSE of LASSO degenerates to the case of the i.i.d.~Gaussian matrices. In particular, we have shown that the asymptotic eigenvalue distribution of Type-B matrix with small column selection rate converges to that of the i.i.d.~Gaussian matrix.
\item In addition, a measurement matrix obtained by orthogonal matrix constructions has fast computation and facilitates parallel processing. For this purpose, we have provided a technique to meet the same performance of Type-B matrix via horizontally concatenating orthogonal bases. Our argument is more systematic than \cite{Vehkapera-arXiv13} and leads to much wider applications.
\item On the other hand, we have shown that the measurement matrix constructed by vertical concatenation of blocks usually gets the worst performance compared to the horizontal concatenation. However, they are at least as good as their random i.i.d.~Gaussian counterparts.
\end{itemize}
As a consequence, we conclude that in addition to the ease of implementation, the structurally orthogonal matrices are preferred for practical use in terms of their good estimation performance.

It was reported that orthogonal measurement matrices also enhance the signal reconstruction threshold in the noisy setups when the optimal Bayesian recovery is
used \cite{Wen-14ArXiv}. Promising future studies include performance evaluation under the optimal Bayesian recovery and development of recovery algorithms
suitable for the structurally orthogonal matrices \cite{Donoho-09PNAS,Rangan-10ArXiv,Krzakala-12JSM,Cakmak-14ArXiv,Ma-15SPL}.

\section*{\sc Appendix A: Proof of Proposition \ref{Pro1}}
First, recall that we have rewritten the average free entropy $\Phi$ in (\ref{eq:avgFreeEn}) by using the replica identity. Within the replica method, it is assumed that the limits of $\beta, N \rightarrow \infty$ and $\tau\rightarrow 0$ can be exchanged. We therefore write
\begin{equation}\label{eq:LimF_AppA}
\Phi = \lim_{\tau\rightarrow 0}\frac{\partial}{\partial \tau} \lim_{\beta, N \rightarrow \infty}\frac{1}{\beta N} \log\aang{\sfZ_{\beta}^{\tau}(\qy,\qA)}_{\qy,\qA}.
\end{equation}
We first evaluate $\aang{\sfZ_{\beta}^{\tau}(\qy,\qA)}_{\qy,\qA}$ for an integer-valued $\tau$, and then generalize it for any positive real number $\tau$. In particular, given the partition function of (\ref{eq:partFun_B}), we obtain
\begin{equation} \label{eq:ReplicPartFun_B}
 \sfZ_{\beta}^{\tau}(\qy,\qA)
 = \int d\qx^{(1)} \cdots d \qx^{(\tau)} \left( \prod_{a=1}^{\tau} e^{-\beta \|\qx^{(a)}\|_1} \right)
 e^{-\sum_{a=1}^{\tau}\frac{1}{\sigma_{a}^2}\left\| \qy-\qA \qx^{(a)} \right\|^2}
\end{equation}
with $\sigma_{a}^2 = \lambda/\beta \equiv \sigma^2$. Using the $\tau$-th moment of the partition function and $P(\qy|\qx^{0})$ in (\ref{eq:postDis_sys}), we have
\begin{equation} \label{eq:ReplicPartFun_B}
 \aang{\sfZ_{\beta}^{\tau}(\qy,\qA)}_{\qy,\qA}
 = \aangBig{ \frac{1}{(\pi\sigma_0^2)^{\oM}} \int \rmd \qy
 e^{-\sum_{a=0}^{\tau}\frac{1}{\sigma_{a}^2}\left\| \qy-\qA \qx^{(a)} \right\|^2}
 }_{\qA,\qX},
\end{equation}
where $\qx^{(a)} = \vect([\qx_1^{(a)} \ldots \qx_{L_c}^{(a)}])$ with $\qx_{p}^{(a)} $ being the $a$-th replica signal vector of $\qx_{p}$, and $\qX\triangleq
\{\qX_{p}, \forall p \}$ with $\qX_{p} \triangleq [ \qx_{p}^{(0)} \, \qx_{p}^{(1)} \cdots \, \qx_{p}^{(\tau)} ]$. The equality of (\ref{eq:ReplicPartFun_B})
follows from the fact that $\qx^{(a)}$ is a random vector taken from the input distribution $P_0(\qx)$ in (\ref{eq:px0}) if $a=0$ and $P_\beta(\qx) = e^{- \beta
\|\qx\|_1 }$ otherwise, and $\sigma_{a}^2 = \sigma_0^2$ if $a=0$ and $\sigma_{a}^2 = \sigma^2$ otherwise.

Before proceeding, we introduce the following preprocessing to deal with the cases in which $\qA_{q,p}$ is a randomly sampled orthogonal matrix (or, deleting
row/columns independently). In particular, we find that it is convenient to work with the enlarged orthogonal matrix $\tqA_{q,p} \in \bbC^{N \times N}$ with rows
and columns setting to zero rather removed \cite{Farrell-11JFAA}. For clarity, we use the following definition.

\begin{Definition} \cite{Farrell-11JFAA}
A square matrix is called a diagonal projection matrix if its off-diagonal entries are all zeros and its diagonal entries are zeros or ones.
\end{Definition}

Let $\qR_{q,p}$ and $\qT_{q,p}$ be $N \times N$ diagonal projection matrices, where the numbers of nonzero diagonal elements of $\qR_{q,p}$ and $\qT_{q,p}$ are $M_q$ and $N_p$, respectively. Therefore, we characterize each block by
\begin{equation} \label{eq:enlargeA}
 \tqA_{q,p} = \qR_{q,p}^{\frac{1}{2}} \qW_{q,p} \qT_{q,p}^{\frac{1}{2}} \in \bbC^{N \times N},
\end{equation}
where $\qW_{q,p}$ is an $N \times N$ standard orthonormal matrix. Since $\{ \qW_{q,p} \}$ are independent standard orthonormal matrices, the positions of nonzero elements of the diagonal projection matrices are irrelevant. For the sake of simplicity, we assume that all the diagonal entries $1$ of $\qR_{q,p}$ and $\qT_{q,p}$ appear first, i.e.,
\begin{equation} \label{eq:def_RT}
 \qR_{q,p} = \left[ \begin{array}{cc} \qI_{M_q} & \qzero \\ \qzero & \qzero  \end{array}\right]~~\mbox{and}~~
 \qT_{q,p} = \left[ \begin{array}{cc} \qI_{N_p} & \qzero \\ \qzero & \qzero  \end{array}\right], ~\forall p, q.
\end{equation}
Recall the Type-C matrix in Section II that the standard orthonormal matrix has been multiplied by $\sqrt{R_{q,p}}$. The gain factor $R_{q,p}$ can be included in $\qR_{q,p}$ via a scaling factor. For notational convenience, here, we do not use the expression of $R_{q,p}\qR_{q,p}$ but absorb $R_{q,p}$ into $\qR_{q,p}$. Also, we enlarge $\qx_p$ and $\qy_q$ to be $N$-dimensional vectors by zero padding. As a consequence, the input-output relationship of (\ref{eq:sysModel}) can be equivalently expressed as
\begin{equation}\label{eq:EnlargedSysModel}
 \underbrace{\left[\begin{array}{c}| \\\tqy_q\\ | \end{array}\right]}_{\triangleq\tqy}
 = \underbrace{\left[\begin{array}{ccc} & | & \\ - & \tqA_{q,p} & - \\ & | & \end{array}\right]}_{\triangleq\tqA}
    \underbrace{\left[\begin{array}{c} | \\ \qx_q \\ | \end{array}\right]}_{\triangleq\tqx}
 + \sigma_0 \underbrace{\left[\begin{array}{c} | \\\qw_q \\ | \end{array}\right]}_{\triangleq\tqw}.
\end{equation}
Notice that all the following derivations are based on the enlarged system (\ref{eq:EnlargedSysModel}). Therefore, by abuse of notation, we continue to write $\qx_p$, $\qy_q$, $\qA_{q,p}$, $\qx$, $\qy$, and $\qA$ for $\tqx_p \in \bbC^{N}$, $\tqy_q \in \bbC^{N}$,  $\tqA_{q,p} \in \bbC^{N \times N}$, $\tqx \in \bbC^{N L_c}$, $\tqy \in \bbC^{N L_r}$, and $\tqA \in \bbC^{N L_r \times N L_c}$, respectively.

Next, we introduce a random vector per block
\begin{equation} \label{eq:GaussAsum1}
\qv_{q,p}^{(a)}\triangleq \qT_{q,p}^{\frac{1}{2}} \qx_{p}^{(a)} \in \bbC^{N} ,
~\mbox{for}~ a=0,1,\ldots,\tau.
\end{equation}
The covariance of $\qv_{q,p}^{(a)}$ and $\qv_{q,p}^{(b)}$ is a $(\tau+1)\times(\tau+1)$ Hermitian $\qQ_{q,p}$ with entries given by
\begin{equation} \label{eq:defQ}
\left(\qv_{q,p}^{(a)}\right)^H \qv_{q,p}^{(b)}
= \left(\qx_{p}^{(a)}\right)^H\qT_{q,p} \left(\qx_{p}^{(b)}\right)
\triangleq N [\qQ_{q,p}]_{a,b} .
\end{equation}
For ease of exposition, we further write $\qV\triangleq\{ \qv_{q,p}^{(a)}, \forall a,b,k \}$, $\qW\triangleq\{ \qW_{q,p}, \forall q, p \}$, and
$\qQ\triangleq\{\qQ_{q,p}, \forall q, p\}$.

Now, we return to the calculation of (\ref{eq:ReplicPartFun_B}). In (\ref{eq:ReplicPartFun_B}), the expectations introduce iterations between $\qx$ and $\qA$. However, the resulting iterations depend only on the covariance as those shown in (\ref{eq:defQ}). Therefore, it is useful to separate the expectation over $\qX$ into an expectation over all possible covariance $\qQ$ and all possible $\qX$ configurations with respect to a prescribed set of $\qQ$ by introducing a $\delta$-function. As a result, (\ref{eq:ReplicPartFun_B}) can be rewritten as
\begin{equation}\label{eq:sf_E2}
 \aang{\sfZ_{\beta}^{\tau}(\qy,\qA)}_{\qy,\qA}=\aangBig{ e^{\beta N{\cal G}^{(\tau)}(\qQ)} }_{\qX}
 =\int e^{\beta N{\cal G}^{(\tau)}(\qQ)}\mu^{(\tau)}(\rmd\qQ),
\end{equation}
where
\begin{equation}\label{eq:G1}
{\cal G}^{(\tau)}(\qQ)\triangleq \frac{1}{\beta N} \log \aangBigg{ \frac{1}{(\pi\sigma_0^2)^\oM} \prod_{q=1}^{L_r} \int \rmd\qy_{q}e^{-\sum_{a=0}^{\tau}\frac{1}{\sigma_{a}^2}
\left\|\qy_{q} - \sum_{p=1}^{L_c}\qR_{q,p}^{\frac{1}{2}} \qW_{q,p}\qv_{q,p}^{(a)} \right\|^2} }_{\qW},
\end{equation}
and
\begin{equation}\label{eq:Measure_mu}
\mu^{(\tau)}(\rmd\qQ)\triangleq \aangBig{ \prod_{q=1}^{L_r}\prod_{p=1}^{L_c}\prod_{0\leq a \leq b}^{\tau}\delta\left(
\qx_{p}^{(a)H}\qT_{q,p}\qx_{p}^{(b)}-N [\qQ_{q,p}]_{a,b}\right) }_{\qX} \rmd\qQ.
\end{equation}
Next, we focus on the calculations of (\ref{eq:G1}) and (\ref{eq:Measure_mu}), respectively.

Let us first consider (\ref{eq:G1}). Integrating over $\qy_{q}$'s in (\ref{eq:G1}) by applying Lemma \ref{lemma_GaussianIntegr} yields
\begin{equation}\label{eq:G3}
{\cal G}^{(\tau)}(\qQ)= \frac{1}{\beta N} \log \aangBigg{ \prod_{q=1}^{L_r}e^{-\tr\left(\left(\sum_{p=1}^{L_c}\qR_{q,p}^{\frac{1}{2}} \qW_{q,p}\qV_{q,p}\right)
{\bf\Sigma}\left(\sum_{p=1}^{L_c}\qR_{q,p}^{\frac{1}{2}} \qW_{q,p}\qV_{q,p}\right)^H\right)} }_{\qW} - \frac{\nu}{\beta} \log\left( 1+\tau\frac{\sigma_0^2}{\sigma^2} \right),
\end{equation}
where
\begin{equation}
\qSigma\triangleq \frac{1}{\sigma^2(\sigma^2+\tau\sigma_0^2)}
\left[
\begin{array}{cc}
\tau\sigma^2 & -\sigma^2 \qone_{\tau}^T \\
-\sigma^2 \qone_{\tau} & (\sigma^2+\tau\sigma_0^2)\qI_{\tau} - \sigma_0^2\qone_{\tau} \qone_{\tau}^T
\end{array}
\right].
\end{equation}

Next, we consider (\ref{eq:Measure_mu}). Through the inverse Laplace transform of $\delta$-function, we can show that
\begin{equation} \label{eq:Measure_muInN}
\mu^{(\tau)}(\qQ) = e^{-\beta N\calR^{(\tau)}(\qQ)+\calO(1)},
\end{equation}
where $\calR^{(\tau)}(\qQ)$ is the rate measure of $\mu^{(\tau)}(\qQ)$ and is given by \cite{Wen-07TCOM}
\begin{equation} \label{eq:Rate_Fun1}
\calR^{(\tau)}(\qQ)=\sum_{q=1}^{L_r}\sum_{p=1}^{L_c}\max_{\tilde{\qQ}_{q,p}}\left\{\frac{1}{\beta}\tr\left(\tilde\qQ_{q,p}\qQ_{q,p}\right)-\frac{1}{\beta N}\log \aangBig{ e^{\tr\left(\tilde\qQ_{q,p}\qX_{p}^H\qT_{q,p}\qX_{p}\right)} }_{\qX_{p}} \right\}
\end{equation}
with $\tilde\qQ_{q,p}\in {\mathbb C}^{(\tau+1)\times(\tau+1)}$ being a symmetric matrix and $\tilde{\qQ}\triangleq \{\tilde\qQ_{q,p}, \forall q, p\}$. Inserting (\ref{eq:Measure_muInN}) into (\ref{eq:sf_E2}) yields $\frac{1}{\beta N}\int e^{\beta N({\cal G}^{(\tau)}(\qQ)-\calR^{(\tau)}(\qQ))+\calO(1)}d\qQ$. Therefore, as $\beta,N\rightarrow\infty$, the integration over $\qQ$ can be performed via the saddle point method, yielding
\begin{equation}\label{eq:QF}
\lim_{\beta, N\rightarrow\infty}\frac{1}{\beta N} \log \aang{\sfZ_{\beta}^{\tau}(\qy,\qA)}_{\qy,\qA} = \max_{\qQ}\left[{\cal G}^{(\tau)}(\qQ)-
\calR^{(\tau)}(\qQ)\right] .
\end{equation}

Substituting (\ref{eq:G3}) and (\ref{eq:Rate_Fun1}) into (\ref{eq:QF}), we arrive the free entropy (\ref{eq:LimF_AppA}) at the saddle-point asymptotic approximation
\begin{equation} \label{eq:RS_FG}
\Phi = \lim_{\tau\rightarrow 0}\frac{\partial}{\partial \tau}\Extr_{\qQ,\tilde{\qQ}}\left\{ \Phi^{(\tau)} \right\},
\end{equation}
where $\Phi^{(\tau)} \triangleq \Phi_1^{(\tau)} + \Phi_2^{(\tau)} + \Phi_3^{(\tau)} + \Phi_4^{(\tau)}$ and
\begin{subequations} \label{eq:TQQ}
\begin{align}
\Phi_1^{(\tau)} &\triangleq \frac{1}{\beta N} \log \aangBigg{ \prod_{q=1}^{L_r}e^{-\tr\left(\left(\sum_{p=1}^{L_c}\qR_{q,p}^{\frac{1}{2}} \qW_{q,p}\qV_{q,p}\right)
{\bf\Sigma}\left(\sum_{p=1}^{L_c}\qR_{q,p}^{\frac{1}{2}} \qW_{q,p}\qV_{q,p}\right)^H\right)} }_{\qW}, \label{eq:TQQ1}\\
\Phi_2^{(\tau)} &\triangleq - \frac{\nu}{\beta} \log\left( 1+\tau\frac{\sigma_0^2}{\sigma^2} \right), \label{eq:TQQ2} \\
\Phi_3^{(\tau)} &\triangleq \frac{1}{\beta N}\sum_{q=1}^{L_r}\sum_{p=1}^{L_c}\log\aangBig{ e^{\tr\left(\tilde\qQ_{q,p}\qX_{p}^H\qT_{q,p}\qX_{p}\right)} }_{\qX_{p}}, \label{eq:TQQ3} \\
\Phi_4^{(\tau)} &\triangleq -\frac{1}{\beta}\sum_{q=1}^{L_r}\sum_{p=1}^{L_c}\tr\left(\tilde\qQ_{q,p}\qQ_{q,p}\right). \label{eq:TQQ4}
\end{align}
\end{subequations}
The saddle-points can be obtained by seeking the points of zero gradient of $\Phi^{(\tau)}$ with respect to $\qQ$ and $\tilde{\qQ}$.

\subsection*{A.1 Replica Symmetry Equations}
Rather than searching for the saddle-points over general forms of $\qQ$ and $\tilde{\qQ}$, we invoke the following hypothesis: The dependence on the replica indices would not affect the physics of the system because replicas have been introduced artificially for the convenience of the expectation operators over $\qy$ and $\qA$. It therefore seems natural to assume {\it replica symmetry} (RS),\footnote{It is natural to set $[\tilde\qQ_{q,p}]_{0,0}=\tilde{r}_{q,p}$ similar to that of $[\qQ_{q,p}]_{0,0}$. It turns out that when $\tau=0$, we get $\tilde{r}_{q,p}=0$. Therefore, to simplify notation, we set $\tilde{r}_{q,p}=0$ at the beginning. In addition, it is natural to let $m_{q,p}$ and $\tm_{q,p}$ be complex-valued variables. We will find that the whole exponents will depend only on the real part of $m_{q,p}$, and $\tm_{q,p}$ turns out to be a real-valued variable. Therefore, we let $m_{q,p}$ and $\tm_{q,p}$ be real-valued variables at the beginning.} i.e.,
\begin{align}
\qQ_{q,p} &=\left[
\begin{array}{cc}
r_{q,p} & m_{q,p} \qone_{\tau}^T \\
m_{q,p} \qone_{\tau} & (Q_{q,p}-q_{q,p})\qI_{\tau} + q_{q,p}\qone_{\tau} \qone_{\tau}^T
\end{array}
\right], \label{eq:RS_Q} \\
\tilde\qQ_{q,p}&= \left[
\begin{array}{cc}
0 & \tm_{q,p} \qone_{\tau}^T \\
\tm_{q,p} \qone_{\tau} & (\tQ_{q,p}-\tq_{q,p})\qI_{\tau} + \tq_{q,p}\qone_{\tau} \qone_{\tau}^T
\end{array}
\right]. \label{eq:RS_tQ}
\end{align}
This RS has been widely accepted in  statistical physics \cite{Nishimori-01BOOK} and used in the field of information/communications theory, e.g., \cite{Kabashima-09JSM,Tanaka-10ISIT,Kabashima-12JSM,Krzakala-12JSM,Vehkapera-arXiv13,Tulino-13IT,Rangan-12TIT,Tanaka-02IT,Moustakas-03TIT,Guo-05IT,Muller-03TSP,Wen-07TCOM,Hatabu-09PRE,Takeuchi-13TIT,Girnyk-14TWC}.

By Lemma \ref{lemma_eigProjectionMatrix}, we can show that for the RS of (\ref{eq:RS_Q}), the eigenvalues of $\qSigma\qQ_{q,p}$ are given by\footnote{The calculation of the eigenvalues can be obtained by using Lemma \ref{lemma_eigProjectionMatrix} for $\qSigma$ and $\qQ_{q,p}$, which is rather laborious but straightforward. For readers' convenience, we detail the calculation in Appendix B.}
\begin{subequations} \label{eq:eigOfSandQ}
\begin{align}
\lambda_0(\qSigma\qQ_{q,p}) &= 0, \\
\lambda_1(\qSigma\qQ_{q,p}) &= \frac{(Q_{q,p}-q_{q,p}) + \tau(r_{q,p}-2m_{q,p} + q_{q,p})}{\sigma^2+\tau\sigma_0^2}, \\
\lambda_{a}(\qSigma\qQ_{q,p}) &= \frac{Q_{q,p}-q_{q,p}}{\sigma^2},~~\mbox{for }a = 2,\ldots,\tau.
\end{align}
\end{subequations}
Write $\qV_{q,p}\qSigma\qV_{q,p}^H = \tqV_{q,p}\tqV_{q,p}^H $, where $\tqV_{q,p} \triangleq \left[\tqv_{q,p}^{(0)} ~\tqv_{q,p}^{(1)} \cdots\tqv_{q,p}^{(\tau)}\right]$ is an $N \times (\tau+1)$ orthogonal matrix. Recall the covariance matrix of $\qV_{q,p}^H \qV_{q,p} $ defined in (\ref{eq:defQ}). From the linear algebra, one can easily obtain that $ \tqv_{q,p}^{(a)}$ is a vector with length $N \lambda_{a}(\qSigma\qQ_{q,p})$ for $a = 0,1,\ldots, \tau$. Applying Lemma \ref{lemma_HaarIntegr} to (\ref{eq:TQQ1}), we get
\begin{equation}\label{eq:RS_G}
\Phi_1^{(\tau)} =\sum_{q=1}^{L_r} H_{q}\left( \left\{\frac{(Q_{q,p}-q_{q,p}) + \tau(r_{q,p}-2m_{q,p} + q_{q,p})}{\sigma^2+\tau\sigma_0^2} \right\} \right)+ (\tau-1) H_{q}\left( \left\{\frac{Q_{q,p}-q_{q,p}}{\sigma^2} \right\} \right),
\end{equation}
where
\begin{equation} \label{eq:defGq}
H_{q}(\{ x_{q,p} \}) \triangleq \Extr_{\{\Gamma_{q,p}\}} \left\{ \Bigg(\sum_{p=1}^{L_c} \Gamma_{q,p} x_{q,p}- \log\Gamma_{q,p}x_{q,p}-1 \Bigg) - \nu_{q}\log\Bigg(1 + \sum_{p=1}^{L_c} \frac{R_{q,p}}{\Gamma_{q,p}} \Bigg) \right\} + \calO\left(\frac{1}{N}\right).
\end{equation}
The solution to the extremization problem in (\ref{eq:defGq}), denoted by $\{ \Gamma_{q,p}^{\star} \}$, enforces the condition
\begin{equation} \label{eq:extrGq}
\Gamma_{q,p}^{\star} - \frac{1}{x_{q,p}} = -\frac{\Delta_{q,p}}{x_{q,p}},
\end{equation}
where
\begin{equation} \label{eq:defDelta}
\Delta_{q,p} \triangleq \nu_{q} \frac{\frac{R_{q,p}}{\Gamma_{q,p}^{\star}}}{1 + \sum_{l=1}^{L_c} \frac{R_{q,l}}{\Gamma_{q,l}^{\star}}}.
\end{equation}

Next, we focus the RS calculation of (\ref{eq:TQQ3}). Substituting the RS form for $\tqQ_{q,p}$ in (\ref{eq:RS_Q}) and the definition of $\qT_{q,p}$ in (\ref{eq:def_RT}) into (\ref{eq:TQQ3}), we obtain
\begin{align}
\Phi_3^{(\tau)}
=&\frac{1}{\beta N}\log \aangBig{ \prod_{p=1}^{L_c} e^{\tr\left({\tt vec}\left(\qX_{p}\right)^H\left(\sum_{q=1}^{L_r} (\tilde\qQ_{q,p}\otimes\qT_{q,p}) \right)
{\tt vec}\left(\qX_{p}\right)\right)} }_{\qX} \notag \\
=& \frac{1}{\beta N}\log \aangBig{ \prod_{p=1}^{L_c} e^{ N \mu_p \left( |\sum_{a=1}^\tau\sqrt{\tq_{p}}x_{p}^{(a)}|^2
  + \sum_{a=1}^\tau  \left( 2 \tm_{p} {\rm Re}\left\{(x_{p}^{(a)})^* x_{p}^{(0)} \right\} + (\tQ_{p}-\tq_{p})|x_{p}^{(a)}|^2 \right) \right) } }_{\qX}, \label{eq:LT_Rate_Fun1}
\end{align}
where
\begin{equation}
\left\{\begin{aligned}
\tQ_{p} &\triangleq \sum_{q=1}^{L_r}\tQ_{q,p},\\
\tq_{p} &\triangleq \sum_{q=1}^{L_r}\tq_{q,p},\\
\tm_{p} &\triangleq \sum_{q=1}^{L_r} \tm_{q,p}.
\end{aligned}\right.
\end{equation}
Then we decouple the first quadratic term in the exponent of (\ref{eq:LT_Rate_Fun1}) by using the Hubbard-Stratonovich transformation (Lemma \ref{lemma_GaussianIntegr}) and introducing the auxiliary vector $z_{p}$, to rewrite (\ref{eq:LT_Rate_Fun1}) as
\begin{multline}\label{eq:LT_Rate_Fun2}
\frac{1}{\beta N}  \log \aangBig{ \int \prod_{p=1}^{L_c} \rmD z_{p}\,e^{N \mu_p \left( \left(\sum_{a=1}^\tau\sqrt{\tq_{p}}x_{p}^{(a)}\right)^* z_{p} + z_{p}^*\left(\sum_{a=1}^\tau\sqrt{\tq_{p}} x_{p}^{(a)}\right) + \sum_{a=1}^\tau \left( 2 \tm_{p} {\rm Re}\left\{(x_{p}^{(a)})^* x_{p}^{(0)} \right\} - (\tq_{p}-\tQ_{p})|x_{p}^{(a)}|^2 \right) \right) } }_{\qX}\\
= \frac{1}{\beta N}  \log \prod_{p=1}^{L_c} \aangBig{ \int \rmD z_{p}  \left( \int \rmd x_{p} \,e^{-N \mu_p
   \left( (\tq_{p}-\tQ_{p})|x_{p}|^2 - 2\dRe\{ x_{p}^* (\tm_{p}x_{p}^{(0)} + \sqrt{\tq_{p}}z_{p}) \} + \beta |x_{p}| \right) } \right)^{\tau} }_{x_{p}^{(0)}},
\end{multline}
where the equality follows from the fact that $\qx^{(a)}$ is a random vector taken from the input distribution $P_\beta(\qx) = e^{- \beta \|\qx\|_1 }$ if $a \neq 0$. Lastly, using (\ref{eq:RS_Q}) and (\ref{eq:RS_tQ}) into (\ref{eq:TQQ4}), we obtain
\begin{equation} \label{eq:RSQQ}
\Phi_4^{(\tau)}= -\frac{1}{\beta}\sum_{q=1}^{L_r}\sum_{p=1}^{L_c} \Big( 2\tau m_{q,p}\tm_{q,p} -\tau (Q_{q,p}-q_{q,p})\tq_{q,p} - \tau Q_{q,p}(\tq_{q,p}-\tQ_{q,p}) + \tau^2q_{q,p}\tq_{q,p} \Big).
\end{equation}

Recall that we have denoted $\sigma^2 = \lambda/\beta$. Before proceeding, we introduce the rescaled variables as
\begin{equation}
\left\{\begin{aligned}
\chi_{q,p} &= \beta(Q_{q,p}-q_{q,p}),\\
\hQ_{q,p} &= (\tq_{q,p}-\tQ_{q,p})/\beta,\\
\hchi_{q,p} &= \tq_{q,p}/\beta^2,\\
\hm_{q,p} &= \tm_{q,p}/\beta,
\end{aligned}\right.
\end{equation}
and we define $\hQ_{p} \triangleq \sum_{q=1}^{L_r} \hQ_{q,p}$, $\hchi_{p} \triangleq \sum_{q=1}^{L_r} \hchi_{q,p}$, $\hm_{p} \triangleq \sum_{q=1}^{L_r} \hm_{q,p}$. Using these variables into (\ref{eq:RS_G}), (\ref{eq:LT_Rate_Fun2}), (\ref{eq:RSQQ}), we obtain
\begin{align}
\Phi_1^{(\tau)} &=\sum_{q=1}^{L_r} H_{q}\left( \left\{\frac{\chi_{q,p} + \tau \beta (r_{q,p}-2m_{q,p} + q_{q,p})}{\lambda+\tau \beta \sigma_0^2} \right\} \right) + (\tau-1) H_{q}\left( \left\{\frac{\chi_{q,p}}{\lambda} \right\} \right), \label{eq:RS_Phi1_reScaled} \\
\Phi_3^{(\tau)} &=\frac{1}{\beta N}  \log \int \prod_{p=1}^{L_c} \rmD z_{p}\, \aangBig{ \left( \phi(x_{p};\hQ_{p},\hm_{p}x_{p}^{(0)} + \sqrt{\hchi_{p}}z_{p},\beta N \mu_p) \right)^{\tau} }_{x_{p}^{(0)}}, \label{eq:RS_Phi3_reScaled} \\
\Phi_4^{(\tau)}&= -\sum_{q=1}^{L_r}\sum_{p=1}^{L_c} \Big( 2\tau m_{q,p}\tm_{q,p} + \tau \chi_{q,p} \hchi_{q,p} - \tau Q_{q,p}\hQ_{q,p} + \tau^2 \beta \hchi_{q,p} q_{q,p} \Big), \label{eq:RS_Phi4_reScaled}
\end{align}
where we have defined
\begin{equation} \label{eq:defPhi}
\phi(x;a,b,c) \triangleq \int \rmd x \, e^{ -c \left( a |x|^2 - 2{\rm Re}\{ b^*x \} + |x| \right) } .
\end{equation}

Substituting (\ref{eq:RS_Phi1_reScaled})--(\ref{eq:RS_Phi4_reScaled}) into (\ref{eq:RS_FG}), taking the derivative of $\Phi^{(\tau)}$ with respect to $\tau$, and letting $\tau \rightarrow 0$, we find $\Phi = \Phi_1 + \Phi_2 + \Phi_3 + \Phi_4$, where
\begin{subequations} \label{eq:GenFree_Beta}
\begin{align}
\Phi_1 &\triangleq \sum_{q=1}^{L_r} \sum_{p=1}^{L_c} \left( \frac{(r_{q,p}-2m_{q,p} + Q_{q,p})}{\lambda}
- \frac{\sigma_0^2 \chi_{q,p}}{\lambda^2} \right) \left(\Gamma_{q,p}^{\star} - \frac{\lambda}{\chi_{q,p}} \right) + \sum_{q=1}^{L_r} \frac{1}{\beta} H_{q}\left( \left\{ \frac{\chi_{q,p}}{\lambda} \right\} \right),
 \label{eq:GenFree_Beta2} \\
\Phi_2 &\triangleq - \frac{\nu \sigma_0^2}{\lambda}, \\
\Phi_3 &\triangleq \frac{1}{\beta N} \int \prod_{p=1}^{L_c} \rmD z_{p} \, \aangBig{
 \log \phi\left(x_{p};\hQ_{p},\hm_{p}x_{p}^{(0)} + \sqrt{\hchi_{p}}z_{p},\beta N \mu_p\right) }_{x_{p}^{(0)}}, \label{eq:GenFree_Beta4} \\
\Phi_4 &\triangleq \sum_{p=1}^{L_c}\sum_{q=1}^{L_r} (Q_{q,p}\hQ_{q,p} - \chi_{q,p}\hchi_{q,p} - 2m_{q,p}\hm_{q,p}).
\end{align}
\end{subequations}
In (\ref{eq:GenFree_Beta2}), we have used the fact that $\frac{q_{q,p}}{\beta(Q_{q,p}-q_{q,p})} = \frac{Q_{q,p} - (Q_{q,p}-q_{q,p}) }{\beta(Q_{q,p}-q_{q,p})}\rightarrow \frac{Q_{q,p}}{\chi_{q,p}}$ as $\beta \rightarrow \infty$. Also, we have used the following result
\begin{equation}\label{eq:difGq}
\left.\frac{\partial H_{q}(\{x_{q,k}\})}{\partial x_{q,p}}\right|_{x_{q,p} = \frac{\chi_{q,p}}{\lambda}}= \Gamma_{q,p}^{\star} - \frac{\lambda}{\chi_{q,p}},
\end{equation}
where the equality follows directly from taking the derivative of $H_{q}$ in (\ref{eq:defGq}). Notice that when substituting $x_{q,p} =\frac{\chi_{q,p}}{\lambda}$ into (\ref{eq:extrGq}), we get
\begin{equation}
 \Gamma_{q,p}^{\star} - \frac{\lambda}{\chi_{q,p}} = - \frac{\Delta_{q,p}\lambda}{\chi_{q,p}}.
\end{equation}
This identity will be used later to simplify some expressions. Also, as $\beta \rightarrow \infty$, $\frac{1}{\beta} H_{q}\left( \left\{\frac{\chi_{q,p}}{\lambda} \right\} \right) \rightarrow 0$ and $\Phi_2 \rightarrow 0$.

Now, recall that we have to search $\qQ$ and $\tilde{\qQ}$ which achieve the extremal condition in (\ref{eq:RS_FG}). With the RS assumption, we only have to determine $\{\hchi_{q,p},\hQ_{q,p},\hm_{q,p},\}$, which can be obtained by equating the partial derivatives of $\Phi^{(\tau)}$ to zeros, i.e.,
\begin{equation}\label{eq:sadp}
\frac{\partial\Phi^{(\tau)}}{\partial \chi_{q,p}}
=\frac{\partial\Phi^{(\tau)}}{\partial Q_{q,p}}
=\frac{\partial\Phi^{(\tau)}}{\partial m_{q,p}}, ~~\forall q, p,
\end{equation}
and then letting $\tau \rightarrow 0$. Evaluating these calculations, we obtain
\begin{subequations} \label{eq:sdPoint_Beta}
\begin{align}
\hQ_{q,p} &= \frac{\Delta_{q,p}}{\chi_{q,p}} , \label{eq:sdPoint_Beta-1}\\
\hchi_{q,p} &= \sum_{r=1}^{L_c} \left(\frac{\mse_{q,r}}{\lambda} - \frac{\sigma_0^2\chi_{q,r}}{\lambda^2} \right)\Gamma'_{q,p,r}
+ \frac{\mse_{q,p}}{\chi_{q,p}^{2}} - \frac{\sigma_0^2}{\lambda} \frac{(1-\Delta_{q,p})}{\chi_{q,p}}, \label{eq:sdPoint_Beta-2} \\
\hQ_{q,p} &= \hm_{q,p}, \label{eq:sdPoint_Beta-3}
\end{align}
\end{subequations}
where $\Gamma'_{q,p,r} \triangleq \partial \Gamma_{q,p}^{\star}/\partial \chi_{q,r}$ and
\begin{equation} \label{eq:defMMSE_qp_Beta}
\mse_{q,p} \triangleq  r_{q,p}-2m_{q,p} + Q_{q,p}.
\end{equation}

Following \cite{Vehkapera-arXiv13}, the expression of $\Gamma'_{q,p,r}$ can be obtained via the inverse function theory
\begin{equation}
\qGamma_{q}' = \left[ \frac{\partial (\chi_{q,1},\ldots, \chi_{q,L_c})}{\partial (\Gamma_{q,1}^{\star},\ldots,\Gamma_{q,L_c}^{\star}) }\right]^{-1},
\end{equation}
where $\frac{\partial (\chi_{q,1},\ldots, \chi_{q,L_c})}{\partial (\Gamma_{q,1}^{\star},\ldots,\Gamma_{q,L_c}^{\star}) }$ is the Jacobian matrix with its $(i,j)$th element being
\begin{align}
\frac{\partial \chi_{q,i}}{\partial \Gamma_{q,j}^{\star} }
&= \frac{\partial }{\partial \Gamma_{q,j}^{\star} } \frac{\lambda(1-\Delta_{q,i})}{\Gamma_{q,i}^{\star}} \notag \\
&= -\lambda \left(\frac{(1-\Delta_{q,i})}{\Gamma_{q,i}^{\star 2}} \delta_{i,j} + \frac{1}{\Gamma_{q,i}^{\star}} \frac{\partial \Delta_{q,i} }{\partial \Gamma_{q,j}} \right) \notag \\
&= -\lambda \left(\frac{(1-2\Delta_{q,i})}{\Gamma_{q,i}^{\star 2}} \delta_{i,j} + \frac{1}{\nu_{q}} \frac{\Delta_{q,i}}{\Gamma_{q,i}^{\star}} \frac{\Delta_{q,j}}{\Gamma_{q,j}^{\star}} \right),
\end{align}
where the first equality follows from (\ref{eq:difGq}). In addition, $\qGamma_{q}'$ can be explicitly obtained by applying the matrix inverse lemma. Specifically, we have
\begin{equation}
\Gamma'_{q,p,r} = [\qGamma_{q}']_{p,r} = \frac{1}{\lambda} \left(\frac{1}{\nu_{q}} \frac{\Delta_{q,p}\Delta_{q,r}\Gamma_{q,p}\Gamma_{q,r}}{(1-2\Delta_{q,p})(1-2\Delta_{q,r})} \left( 1+ \sum_{l=1}^{L_c} \frac{1}{\nu_{q}} \frac{\Delta_{q,l}^2}{1-2\Delta_{q,l}}\right)^{-1}
- \frac{\Gamma_{q,r}^2}{1-2\Delta_{q,r}} \delta_{p,r} \right). \label{eq:GammaMatrixInvLemma}
\end{equation}

To get more explicit expressions for $\mse_{q,p}$ and $\chi_{q,p}$, let us simplify $\Phi_3$ in (\ref{eq:GenFree_Beta4}). As $\beta \rightarrow \infty$, we obtain
\begin{equation} \label{eq:phiBeta_minXp}
\frac{1}{\beta N} \log \phi(x_{p};\hQ_{p},\hm_{p}x_{p}^{(0)} + \sqrt{\hchi_{p}}z_{p},\beta N \mu_p)
= \mu_p \min_{ x_{p} } \Big\{ \hm_p |x_{p}|^2 - x_{p}^* y_p - y_p^*x_{p} + |x_{p}|  \Big\},
\end{equation}
where we have used $y_p = \hm_{p}x_{p}^{(0)} + \sqrt{\hchi_{p}}z_{p}$ and have used the identity $\hQ_{q,p} = \hm_{q,p}$ in (\ref{eq:sdPoint_Beta-3}) to simplify the result. The optimal solution $\hat{x}_{p}$ in (\ref{eq:phiBeta_minXp}) is given by (see \cite[Lemma V.1]{Maleki-13IT} for a derivation)
\begin{equation} \label{eq:optOfX}
 \hat{x}_p = \frac{\left( |y_p| - \frac{1}{2}\right)_{+} \frac{y_p}{|y_p|}}{\hm_{p}}.
\end{equation}
If we substitute the optimal solution (\ref{eq:optOfX}) into (\ref{eq:phiBeta_minXp}), we get
\begin{equation} \label{eq:phiBeta_minXp2}
 (\ref{eq:phiBeta_minXp}) = - \mu_p  \frac{\left(|y_p|-\frac{1}{2}\right)^2}{\hm_{p}}\, {\sf I}_{ \{|y_p| > \frac{1}{2}\} }
= - \mu_p G(y_p; \hm_{p}),
\end{equation}
where we have defined
\begin{equation}
G(y; A) =  \frac{\left(|y|-\frac{1}{2}\right)^2}{A} \,{\sf I}_{ \{|y| > \frac{1}{2}\} }.
\end{equation}

Notice that $\qx_{p}^{(0)}$ and $\qz_p$ are standard Gaussian random vectors with i.i.d.~entries for all $p$. Therefore, let $Z$ denote the standard Gaussian random random variable. Then (\ref{eq:GenFree_Beta4}) turns out to be
\begin{equation} \label{eq:GenFree_Beta4-1}
\Phi_3 = - \sum_{p=1}^{L_c} \mu_p \aangBig{ G\left(Z\sqrt{\hchi_{p}+ \hm_{p}^2} ; \hm_{p}\right) }_{Z} .
\end{equation}

To deal with the partial derivatives of $\Phi_3$, let us first define
\begin{align}
g_{\sfc}(\zeta) &\triangleq \zeta e^{-\frac{1}{4\zeta}} - \sqrt{\pi\zeta} \sfQ\left(\frac{1}{\sqrt{2\zeta}}\right), \\
\aag_{\sfc}(\zeta) &\triangleq e^{-\frac{1}{4\zeta}} - \sqrt{\frac{\pi}{4\zeta}} \sfQ\left(\frac{1}{\sqrt{2\zeta}}\right).
\end{align}
Following the manipulations as those in \cite[(350)--(353)]{Tulino-13IT}, we have the following useful identities:
\begin{align}
 \aang{ G(Z \sqrt{\zeta} ; A) }_{Z}
 &= \frac{1}{A} \int_{|z| > 1/2} \left(|z|-\frac{1}{2}\right)^2 \frac{1}{\pi\zeta} e^{-\frac{1}{\zeta} |z|^2} \rmd z = \frac{g_{\sfc}(\zeta)}{A}, \label{eq:ExG}
\end{align}
and
\begin{align}
\frac{\partial g_{\sfc}(\zeta) }{ \partial x}
&= \left( \frac{\partial \zeta}{\partial x} \right) \aag_{\sfc}(\zeta) .
\end{align}
After assessing the partial derivatives of $\Phi$ (or more precisely $\Phi_3+\Phi_4$ ) with respect to the variables $\{\hm_{q,p},\hQ_{q,p},\hchi_{q,p}\}$, we obtain
\begin{subequations} \label{eq:sdPoint_Beta2}
\begin{align}
m_{q,p} &= \mu_p \, \rho_x \aag_{\sfc}(\hm_{p}^2+\hchi_{p}) , \\
Q_{q,p} &= \mu_p \left( \frac{1-\rho_x}{\hm_{p}^2} g_{\sfc}(\hchi_{p}) + \frac{\rho_x}{\hm_{p}^2} g_{\sfc}(\hm_{p}^2+\hchi_{p}) \right), \\
\chi_{q,p} &= \mu_p \left( \frac{1-\rho_x}{\hm_{p}} \aag_{\sfc}(\hchi_{p}) + \frac{\rho_x}{\hm_{p}} \aag_{\sfc}(\hm_{p}^2+\hchi_{p})  \right).
\end{align}
\end{subequations}
In addition, directly from the definition of $\qQ_{q,p}$ in (\ref{eq:defQ}), we have $r_{q,p} = \mu_p \rho_x$. Substituting $r_{q,p}$ and (\ref{eq:sdPoint_Beta}) into (\ref{eq:defMMSE_qp_Beta}), we obtain $\mse_{q,p}$. Notice that $m_{q,p}$, $Q_{q,p}$, $\chi_{q,p}$ and $\mse_{q,p}$ are irrelevant to index $q$. We denote $m_p = m_{q,p}$, $Q_p = Q_{q,p}$, $\chi_p = \chi_{q,p}$, and $\mse_p = \mse_{q,p}$ for clarity.

Combining the definition in (\ref{eq:defDelta}), the result in (\ref{eq:GammaMatrixInvLemma}), and all the coupled equations in (\ref{eq:sdPoint_Beta}) and (\ref{eq:sdPoint_Beta2}), we get the result in Proposition \ref{Pro1}. Notice that in Proposition \ref{Pro1}, we have used the rescaled variable $\Gamma_{q,p}^{\star}/\lambda \rightarrow \Gamma_{q,p}^{\star}$ for the sake of notational simplicity.

\section*{\sc Appendix B: Eigenvalues of the matrix $\qSigma\qQ_{q,p}$}
Applying Lemma \ref{lemma_eigProjectionMatrix} to $\qSigma$ and $\qQ_{q,p}$, we get
\begin{equation}
\qSigma\qQ_{q,p} = \qU \left[
\begin{array}{cc}
\qB_{1,1} & \qzero \\
\qzero & \qB_{2,2} \\
\end{array}
\right] \qU^H,
\end{equation}
where
\begin{subequations}
\begin{align}
\qB_{1,1} &= \frac{1}{\sigma^2+\tau\sigma_0^2}
\left[\begin{array}{cc}
\tau(r_{q,p}-m_{q,p}^{*}) & \sqrt{\tau}\big(\tau(m_{q,p}-q_{q,p})-(Q_{q,p}-q_{q,p})\big) \\
-\sqrt{\tau}(r_{q,p}-m_{q,p}^{*}) & -\tau(m_{q,p}-q_{q,p})-(Q_{q,p}-q_{q,p})
\end{array}
\right], \\
\qB_{2,2} &= \frac{Q_{q,p}-q_{q,p}}{\sigma^2} \qI_{\tau-1}.
\end{align}
\end{subequations}
It is easy to see that the rows of $\qB_{1,1}$ are linearly dependent, and the eigenvalues are $\frac{r_{q,p}-m_{q,p}-m_{q,p}^{*} + q_{q,p}}{\sigma^2+\tau\sigma_0^2}$ and $0$. Therefore, the eigenvalues of $\qSigma\qQ_{q,p}$ are in the form of (\ref{eq:eigOfSandQ}).

\section*{\sc Appendix C: Mathematical Tools}
For readers' convenience, we provide some mathematical tools needed in this appendix.

\begin{Lemma} (Gaussian Integral and Hubbard-Stratonovich Transformation) \label{lemma_GaussianIntegr}
Let $\qz$ and $\qb$ be $N$-dimensional real vectors, and $\qA$ an $M \times M$ positive definite matrix. Then
\begin{equation}
\frac{1}{\pi^N} \int \rmd \qz e^{-\qz^H\qA\qz+\qz^H\qb+\qb^H\qz}
= \frac{1}{\det(\qA)} e^{\qb^H\qA^{-1}\qb}.
\end{equation}
Using this equation from right to left is usually called the \emph{Hubbard-Stratonovich} transformation.

\end{Lemma}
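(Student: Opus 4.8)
The plan is to prove this by the classical ``complete the square, then diagonalize'' argument, working with the complex Gaussian measure $\rmd\qz = \prod_{i}\rmd\dRe\{z_i\}\,\rmd\dIm\{z_i\}$ that is implied by the $\qz^H$ notation and the $1/\pi^N$ normalization; that is, I read $\qz,\qb\in\bbC^N$ and $\qA$ as an $N\times N$ positive definite Hermitian matrix, consistent with the complex setting used throughout the paper. The first step is to complete the square in the exponent. Using $\qA=\qA^H$, and hence $(\qA^{-1})^H=\qA^{-1}$, a direct expansion gives
\begin{equation}
-\qz^H\qA\qz+\qz^H\qb+\qb^H\qz
= -(\qz-\qA^{-1}\qb)^H\qA(\qz-\qA^{-1}\qb) + \qb^H\qA^{-1}\qb,
\end{equation}
so the constant factor $e^{\qb^H\qA^{-1}\qb}$ pulls out of the integral and only a centered Gaussian in the shifted variable $\qz-\qA^{-1}\qb$ remains.

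Next I would remove the shift by the change of variables $\qz\mapsto\qz+\qA^{-1}\qb$ and evaluate the pure Gaussian normalization
\begin{equation}
\frac{1}{\pi^N}\int\rmd\qz\,e^{-\qz^H\qA\qz} = \frac{1}{\det(\qA)}.
\end{equation}
For this I would diagonalize $\qA=\qU\qLambda\qU^H$ with $\qU$ unitary and $\qLambda=\diag(\lambda_1,\ldots,\lambda_N)$, $\lambda_i>0$, and substitute $\qz\mapsto\qU\qz$. A unitary substitution preserves $\rmd\qz$ (viewed on $\bbR^{2N}$ it is orthogonal, so its Jacobian has unit modulus), and the integral factorizes into one-dimensional pieces $\frac{1}{\pi}\int_{\bbC}e^{-\lambda_i|z_i|^2}\,\rmd z_i$. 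Each of these equals $1/\lambda_i$ by separating real and imaginary parts, and the product gives $\prod_i\lambda_i^{-1}=1/\det(\qA)$. Combining this with the completed-square constant yields the claimed identity, and the Hubbard--Stratonovich reading is just the same identity used from right to left.

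The main obstacle is justifying the shift $\qz\mapsto\qz+\qA^{-1}\qb$, since $\qA^{-1}\qb$ is in general a complex vector, so the translation moves the real integration variables $\dRe\{z_i\},\dIm\{z_i\}$ off the real axes into the complex plane. The careful way I would handle this is to perform the diagonalizing substitution first, so that in each coordinate the integrand is an entire function of $z_i$ with Gaussian decay along every fixed-argument direction; a standard contour-deformation argument then shows the line integral is unchanged when each real axis of integration is translated by the complex shift. Equivalently, one writes the complex translation as a genuine real shift composed with an imaginary shift and applies Cauchy's theorem on the resulting rectangular contours, using the Gaussian decay to discard the vertical sides. Once this analytic-continuation step is secured, the remaining computation is the routine diagonalization above.
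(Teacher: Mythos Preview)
Your proof is correct and follows the standard route for the complex Gaussian integral: complete the square, diagonalize, and reduce to a product of one-dimensional integrals, with the contour-deformation argument properly flagged as the only delicate point. The paper itself does not prove this lemma; it is simply stated in Appendix~C as a well-known tool (the Gaussian integral / Hubbard--Stratonovich transformation) without any accompanying argument, so there is no paper proof to compare against.
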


\begin{Lemma} \label{lemma_eigProjectionMatrix}
For a matrix
\begin{equation}
\qA= \left[
\begin{array}{cc}
A_{1,1} & A_{1,2} \qone_{\tau}^T \\
A_{1,2}^* \qone_{\tau} & A_{2,2}\qI_{\tau} + \epsilon \qone_{\tau} \qone_{\tau}^T
\end{array}
\right] \in \bbC^{(\tau+1) \times (\tau+1)},
\end{equation}
the eigen-decomposition of the matrix is given by \cite{Kabashima-08JPCS}
\begin{equation}
\qA= \qU \left[
\begin{array}{cc|cccc}
A_{1,1} & \sqrt{\tau}A_{1,2} & 0 & 0 & \ldots & 0 \\
\sqrt{\tau}A_{1,2}^{*} & A_{2,2} + \tau \epsilon & 0 & 0 & \ldots & 0 \\
\hline
0 & 0 & A_{2,2} & 0 & \ldots & 0 \\
0 & 0 & 0 & A_{2,2} & \ldots & 0 \\
\vdots & \vdots & \vdots & \vdots & \ddots & \vdots \\
0 & 0 & 0 & 0 & \ldots & A_{2,2}
\end{array}
\right] \qU^H,
\end{equation}
where $\qU = [\qu_0 ~\qu_1\cdots \qu_{\tau} ]$ denotes a $(\tau+1)$-dimensional orthonormal basis composed of $\qu_0 =[1~0~0\cdots0]^T$, $\qu_1 =[0~\tau^{-1/2}~\tau^{-1/2}\cdots\tau^{-1/2}]^T$ and $\tau-1$ orthonormal vectors $\qu_2,\qu_3,\ldots,\qu_{\tau}$, which are orthogonal to both $\qu_0$ and $\qu_1$.
\end{Lemma}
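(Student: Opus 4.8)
The plan is to verify the claimed factorization by an explicit change of basis whose choice is forced by the permutation symmetry of $\qA$. Observe first that $\qA$ commutes (under conjugation) with every permutation matrix that permutes the last $\tau$ coordinates while fixing the $0$th one: the off-diagonal blocks $A_{1,2}\qone_{\tau}^T$ and $A_{1,2}^*\qone_{\tau}$ are invariant because their entries are constant, and $A_{2,2}\qI_{\tau}+\epsilon\qone_{\tau}\qone_{\tau}^T$ is invariant because $P^{H}P=\qI_\tau$ and $P^{H}\qone_{\tau}=\qone_{\tau}$ for any such permutation $P$. Hence $\qA$ lies in the commutant of the $S_\tau$-action on $\bbC^{\tau+1}$. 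Decomposing that representation as (trivial)$^{\oplus 2}\oplus$(standard) — the two trivial copies being the $0$th coordinate and the all-ones direction $\qu_1$ on coordinates $1,\dots,\tau$, the standard piece being the $(\tau-1)$-dimensional ``sum-zero'' subspace — Schur's lemma already predicts that in the basis $\qU$ the matrix $\qA$ becomes block diagonal: a $2\times 2$ block on $\mathrm{span}\{\qu_0,\qu_1\}$ and a scalar multiple of $\qI_{\tau-1}$ on the sum-zero complement. Everything after this is bookkeeping: identify the $2\times 2$ block and the scalar.

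First I would note that $\qU=[\qu_0\ \qu_1\cdots\qu_\tau]$ is a genuine (real) orthogonal matrix: $\qu_0\perp\qu_1$ is immediate, $\|\qu_0\|=\|\qu_1\|=1$, and $\qu_2,\dots,\qu_\tau$ are any orthonormalization of a complement of $\mathrm{span}\{\qu_0,\qu_1\}$, hence orthonormal and orthogonal to $\qu_0$ and $\qu_1$. Since the $\qu_i$ can be taken real, $\qU^{H}\qA\qU$ has $(i,j)$ entry $\qu_i^{H}\qA\qu_j$.

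Next I would compute the three actions $\qA\qu_0$, $\qA\qu_1$, and $\qA\qu_j$ for $j\ge 2$ entrywise. Reading off the first column of $\qA$ gives $\qA\qu_0=A_{1,1}\qu_0+\sqrt{\tau}\,A_{1,2}^*\,\qu_1$. Summing the last $\tau$ columns of $\qA$ and dividing by $\sqrt{\tau}$ gives $\qA\qu_1=\sqrt{\tau}\,A_{1,2}\,\qu_0+(A_{2,2}+\tau\epsilon)\,\qu_1$, where the $\tau\epsilon$ arises because $\qone_{\tau}\qone_{\tau}^T$ acts as $\tau$ on the all-ones direction and as $0$ on the sum-zero subspace. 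For $j\ge 2$ one has $\qone_{\tau}^T\qu_j=0$ (the last $\tau$ entries sum to zero) and the $0$th entry of $\qu_j$ is zero, so the coupling to the $0$th coordinate vanishes and $(A_{2,2}\qI_{\tau}+\epsilon\qone_{\tau}\qone_{\tau}^T)$ acts as multiplication by $A_{2,2}$; thus $\qA\qu_j=A_{2,2}\qu_j$. Assembling $[\qU^{H}\qA\qU]_{i,j}=\qu_i^{H}\qA\qu_j$ from these three relations yields precisely the stated block-diagonal middle matrix: the $2\times 2$ block with diagonal entries $A_{1,1}$ and $A_{2,2}+\tau\epsilon$ and off-diagonal entries $\sqrt{\tau}\,A_{1,2}$ (row $0$, column $1$) and $\sqrt{\tau}\,A_{1,2}^*$ (row $1$, column $0$), together with the trailing block $A_{2,2}\qI_{\tau-1}$. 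Because $\qU$ is orthogonal, this is the asserted decomposition $\qA=\qU(\,\cdot\,)\qU^{H}$.

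There is no serious obstacle: the result is a symmetry observation plus a short computation. The points demanding care are (i) placing the Hermitian conjugates correctly so that the $(0,1)$ and $(1,0)$ entries of the $2\times 2$ block come out as $\sqrt{\tau}\,A_{1,2}$ and $\sqrt{\tau}\,A_{1,2}^*$ rather than swapped; (ii) tracking that $\epsilon$ enters only through $\tau\epsilon$ in the $(1,1)$ slot; and (iii) recognizing that the ``eigen-decomposition'' in the statement is in fact a reduction to $2\times 2$-block-diagonal form — a full diagonalization would further diagonalize that $2\times 2$ block (solving a scalar quadratic), but the block form is already all that is needed downstream, e.g. for computing the eigenvalues of $\qSigma\qQ_{q,p}$ in Appendix B.
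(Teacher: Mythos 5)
Your proof is correct. Note, though, that the paper does not actually prove this lemma at all---it is stated in Appendix~C with a bare citation to Kabashima's 2008 paper---so there is no in-paper argument to compare against; your write-up is a valid, self-contained substitute. The substance of your verification is exactly what one would want: the three actions $\qA\qu_0=A_{1,1}\qu_0+\sqrt{\tau}A_{1,2}^*\qu_1$, $\qA\qu_1=\sqrt{\tau}A_{1,2}\qu_0+(A_{2,2}+\tau\epsilon)\qu_1$, and $\qA\qu_j=A_{2,2}\qu_j$ for $j\ge 2$ are all computed correctly, and together with unitarity of $\qU$ they give $\qU^H\qA\qU$ equal to the stated middle matrix, with the conjugates in the $(0,1)$ and $(1,0)$ slots placed correctly. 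The representation-theoretic preamble (commutant of the $S_\tau$-action, Schur's lemma) is a nice motivation but is not load-bearing; the entrywise computation alone suffices, and it has the advantage of working for arbitrary complex $A_{1,1},A_{1,2},A_{2,2},\epsilon$ without any Hermitian or realness assumption---your remark that the $\qu_i$ ``can be taken real'' is unnecessary, since $[\qU^H\qA\qU]_{i,j}=\qu_i^H\qA\qu_j$ holds for any unitary $\qU$. You are also right that the result is a reduction to $2\times 2$-block-diagonal form rather than a full diagonalization; that is precisely how the paper uses it in Appendix~B, where the $2\times2$ block of $\qSigma\qQ_{q,p}$ is subsequently handled by inspection.
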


\begin{Lemma} \label{lemma_HaarIntegr}
Let $\{ \qx_{p}: p=1,\ldots,L \}$ be a set of vectors that satisfy $\| \qx_{p} \|^2 = N x_{p} $ for some non-negative real values $\{ x_{p} \}$, $\{\qW_{p} \in\bbC^{N \times N}: p= 1,\ldots,L\}$ be a set of independent Haar measure of random matrices, and $\{\qR_{p}\}$ be a set of positive-semidefinite matrices. Define
\begin{equation} \label{eq:integrOfOrthRM}
H(\{ x_{p} \}) = \frac{1}{N} \log \aangBigg{ e^{ -\frac{1}{\sigma^2} \left\| \sum_{p=1}^{L}\qR_{p}^{\frac{1}{2}}  \qW_{p} \qx_{p} \right\|^2 } }_{\{\qW_{p} \}} .
\end{equation}
Then for large $N$, we have
\begin{equation}
H(\{ x_{p} \}) =
\Extr_{\{\Gamma_{p}\}} \left\{ \sum_{p=1}^{L} (\Gamma_{p} x_{p}- \log\Gamma_{p} x_{p}-1)-\frac{1}{N} \log\det\left(\qI_N + \sum_{p=1}^{L} \frac{1}{\sigma^2\Gamma_{p}}\qR_{p} \right)  \right\}
+ \calO(1/N) .
\end{equation}
This lemma extends \cite[Lemma 1]{Vehkapera-arXiv13} to deal with the formula of (\ref{eq:integrOfOrthRM}) when $\qR_{p} \neq \qI_N $ and $\{\qW_{p}\}$ are the Haar measure of \emph{complex} random matrices.
\end{Lemma}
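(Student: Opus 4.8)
The plan is to evaluate the Haar average in~(\ref{eq:integrOfOrthRM}) directly, exploiting the fact that, because $\qW_{p}$ is Haar-distributed and $\|\qx_{p}\|^{2}=Nx_{p}$, the vector $\qu_{p}\triangleq\qW_{p}\qx_{p}$ is uniformly distributed on the sphere $\{\qu\in\bbC^{N}:\|\qu\|^{2}=Nx_{p}\}$, with the $\qu_{p}$'s mutually independent. Hence
\begin{equation}
\aangBig{ e^{-\frac{1}{\sigma^{2}}\|\sum_{p}\qR_{p}^{\frac{1}{2}}\qW_{p}\qx_{p}\|^{2}} }_{\{\qW_{p}\}}
= \frac{1}{\prod_{p=1}^{L} Z_{p}}\int\Big(\prod_{p=1}^{L}\rmd\qu_{p}\,\delta(\|\qu_{p}\|^{2}-Nx_{p})\Big)\, e^{-\frac{1}{\sigma^{2}}\|\sum_{p}\qR_{p}^{\frac{1}{2}}\qu_{p}\|^{2}},
\end{equation}
with $Z_{p}\triangleq\int\rmd\qu_{p}\,\delta(\|\qu_{p}\|^{2}-Nx_{p})$ the surface measure of the sphere. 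First I would represent each Dirac delta through its inverse-Laplace form, $\delta(\|\qu_{p}\|^{2}-Nx_{p})\propto\int\rmd\Gamma_{p}\,e^{-\Gamma_{p}(\|\qu_{p}\|^{2}-Nx_{p})}$ along a vertical contour with $\dRe\{\Gamma_{p}\}>0$; the conjugate variables $\Gamma_{p}$ thereby introduced are exactly the variables over which the statement extremizes.

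Once the sphere constraints are lifted, the integral over the stacked vector $\qu\triangleq[\qu_{1}^{T},\ldots,\qu_{L}^{T}]^{T}\in\bbC^{LN}$ is Gaussian, with quadratic form $\frac{1}{\sigma^{2}}\qu^{H}(\qD+\qC^{H}\qC)\qu$, where $\qC\triangleq[\qR_{1}^{\frac{1}{2}}\,\cdots\,\qR_{L}^{\frac{1}{2}}]\in\bbC^{N\times LN}$ and $\qD$ is the $LN\times LN$ block-diagonal matrix with $p$th block $\sigma^{2}\Gamma_{p}\qI_{N}$. Applying Lemma~\ref{lemma_GaussianIntegr} and then the Sylvester identity $\det(\qI_{LN}+\qD^{-1}\qC^{H}\qC)=\det(\qI_{N}+\qC\qD^{-1}\qC^{H})$ collapses the $LN\times LN$ determinant to $\det(\qI_{N}+\sum_{p}\frac{1}{\sigma^{2}\Gamma_{p}}\qR_{p})$ and leaves a factor $\prod_{p}\Gamma_{p}^{-N}$; evaluating $Z_{p}$ by the same device yields a saddle at $\Gamma_{p}=1/x_{p}$ and supplies the $-\log(\Gamma_{p}x_{p})-1$ terms together with the $\pi$-normalizations. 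What remains is an $L$-dimensional contour integral over $\{\Gamma_{p}\}$ whose integrand equals $e^{N\calA(\{\Gamma_{p}\})}$ up to sub-exponential factors, with $\calA(\{\Gamma_{p}\})=\sum_{p}\big(\Gamma_{p}x_{p}-\log(\Gamma_{p}x_{p})-1\big)-\frac{1}{N}\log\det(\qI_{N}+\sum_{p}\frac{1}{\sigma^{2}\Gamma_{p}}\qR_{p})$. Laplace's method, deforming the contours through the real stationary point, then gives $\frac{1}{N}\log(\cdot)=\Extr_{\{\Gamma_{p}\}}\calA+\calO(1/N)$, which is the assertion. Specializing to $\qR_{p}=\qI_{N}$ with real-valued vectors recovers \cite[Lemma~1]{Vehkapera-arXiv13}: the real/complex distinction affects only $\calO(1)$ prefactors killed by $\frac{1}{N}\log(\cdot)$, while general $\qR_{p}$ is handled entirely by the Sylvester step.

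The step I expect to be the main obstacle is justifying the saddle-point evaluation of the $\{\Gamma_{p}\}$-integral: one must exhibit a dominant real stationary point, verify that the $\qu$-Gaussian integral converges along the deformed contour (equivalently $\qD+\qC^{H}\qC\succ0$ there, which forces $\Gamma_{p}>0$ and makes $\det(\qI_{N}+\sum_{p}\frac{1}{\sigma^{2}\Gamma_{p}}\qR_{p})>0$ so that the $\log\det$ is real), and control the Gaussian-fluctuation contribution so that it is uniformly $\calO(1/N)$ (nondegenerate Hessian in $\{\Gamma_{p}\}$). As is standard in the replica literature, I would carry this out at the same level of rigor as the surrounding analysis; alternatively, the leading exponential order can be made rigorous via large-deviation results for integrals over the unit sphere.
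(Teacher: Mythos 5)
Your proposal is correct and follows essentially the same route as the paper's proof: represent $\qu_{p}=\qW_{p}\qx_{p}$ as uniform on the sphere of radius $\sqrt{Nx_{p}}$, lift the constraints via the inverse-Laplace representation of the delta functions with conjugate variables $\{\Gamma_{p}\}$, perform the Gaussian integration, and evaluate the remaining $\{\Gamma_{p}\}$-integral (and the normalizers, with saddle $\Gamma_{p}=1/x_{p}$) by the saddle-point method. The only cosmetic difference is that you integrate the stacked $LN$-dimensional Gaussian directly and invoke Sylvester's identity, whereas the paper linearizes the coupling term by a Hubbard--Stratonovich auxiliary vector and integrates $\{\qu_{p}\}$ and that vector sequentially; both organizations yield the same factor $\prod_{p}\Gamma_{p}^{-N}\det\left(\qI_{N}+\sum_{p}\frac{1}{\sigma^{2}\Gamma_{p}}\qR_{p}\right)^{-1}$ and hence the same extremization formula.
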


\begin{proof}
This lemma can be obtained by following the steps of \cite[Lemma 1]{Vehkapera-arXiv13} with some variations in order to adopt it to our setting. 

From the definition of $\qW_{p}$ and $\qx_{p}$, the vector $\qu_{p} = \qW_{p} \qx_{p}$ can be considered to be uniformly distributed on a surface of a sphere with
radius $\sqrt{N x_{p}}$ for each $p$. Then the joint probability density function (pdf) of $\{ \qu_{p} \}$ is given by
\begin{align}
P(\{ \qu_{p} \})
&= \frac{1}{\sfZ} \prod_{p=1}^{L} \delta\left( \|\qu_{p}\|^2 - N x_{p} \right) \notag \\
&= \frac{1}{\sfZ} \int  \prod_{p=1}^{L}\left( \rmd\Gamma_{p} \frac{1}{2 \pi \sfj} e^{-\Gamma_{p} (\| \qu_{p}\|^2 - N x_{p}) } \right), \label{eq:pu_{p}}
\end{align}
where $\sfZ$ is the normalization factor and $\{ \Gamma_{p} \}$ is a set of complex numbers. The normalization factor is given by
\begin{equation}
\sfZ = \int  \prod_{p=1}^{L}\left( \rmd\Gamma_{p} \rmd\qu_{p} \frac{1}{2 \pi \sfj} e^{-\Gamma_{p} (\| \qu_{p}\|^2 - N x_{p}) } \right).
\end{equation}
Using the Gaussian integration formula (i.e., Lemma \ref{lemma_GaussianIntegr}) with respect to $\qu_{p}$, the normalization factor becomes
\begin{equation}
\sfZ = \int  \prod_{p=1}^{L}\left( \rmd\Gamma_{p} \frac{\pi^{N}}{2 \pi \sfj} e^{ N (\Gamma_{p} x_{p} - \log\Gamma_{p}) } \right).
\end{equation}
Since we are interested in the large $N$ analysis, the saddle-point method can further simplify the normalization factor to the form
\begin{align}
\frac{1}{N}\log\sfZ &= \sum_{p=1}^{L} \Extr_{\Gamma_{p}} \left\{ \Gamma_{p} x_{p} - \log\Gamma_{p} \right\} + \log\pi + \calO(1/N) \notag \\
&= \sum_{p=1}^{L} (1+\log x_{p} ) + \log\pi + \calO(1/N), \label{eq:Zfinal}
\end{align}
where the second equality is obtained by solving the extremization problem.

Next, we deal with the calculation of $H$ by writting
\begin{align}
H &=  \frac{1}{N}\log \aangBigg{ e^{ -\frac{1}{\sigma^2} \left\| \sum_{p=1}^{L}\qR_{q,p}^{\frac{1}{2}}  \qW_{p} \qx_{p} \right\|^2 } }_{\{ \qW_{p} \}} \notag \\
  &=  \frac{1}{N}\log \left( \int \prod_{p=1}^{L} \rmd\qu_{p} P(\{ \qu_{p} \})
      e^{ -\frac{1}{\sigma^2} \left\| \sum_{p=1}^{L}\qR_{q,p}^{\frac{1}{2}}  \qu_{p} \right\|^2 } \right),
\end{align}
where the second equality follows from the definition of the joint pdf of $\{ \qu_{p} \}$. Applying the Hubbard-Stratonovich transformation (Lemma \ref{lemma_GaussianIntegr}) together with the expressions (\ref{eq:pu_{p}}) and (\ref{eq:Zfinal}) to the above provides
\begin{multline}
H= \frac{1}{N}\log \left[ \int \left(\prod_{p=1}^{L} \rmd\Gamma_{p} e^{N \Gamma_{p} x_{p}}\right) \int \rmd\qz e^{-\sigma^2 \qz^H\qz}
\times ~\int \prod_{p=1}^{L} \rmd\qu_{p} ~e^{-\Gamma_{p} \qu_{p}^H\qu_{p} + \sfj \qz^H(\qR_{q,p}^{\frac{1}{2}}  \qu_{p}) - \sfj (\qR_{q,p}^{\frac{1}{2}}  \qu_{p})^H\qz } \right]\\
+ \log\frac{\sigma^2}{\pi}  - \frac{1}{N}\log \sfZ.
\end{multline}
Using the Gaussian integration repeatedly with respect to $\{ \qu_{p} \}$ and $\qz$ yields
\begin{align}
H&= \frac{1}{N} \log \int \left(\prod_{p=1}^{L} \rmd\Gamma_{p} e^{N (\Gamma_{p} x_{p}-\log\Gamma_{p}) }\right)
\int \rmd\qz ~e^{- \qz^H\left(\sigma^2\qI_N + \sum_{p=1}^{L} \frac{1}{\Gamma_{p}}\qR_{p} \right)\qz } \notag \\
& \qquad + \log\frac{\sigma^2}{\pi} -\sum_{p=1}^{L} (1+\log x_{p} )
\notag \\
&= \frac{1}{N}\log \int \left(\prod_{p=1}^{L} \rmd\Gamma_{p}\right) e^{ \sum_{p=1}^{L}N (\Gamma_{p} x_{p}- \log\Gamma_{p})-\log\det\left(\sigma^2\qI_N + \sum_{p=1}^{L} \frac{1}{\Gamma_{p}}\qR_{p} \right) } \notag \\
& \qquad + \log\sigma^2 -\sum_{p=1}^{L} (1+\log x_{p} )
\notag \\
&= \Extr_{\{\Gamma_{p}\}} \left\{ \sum_{p=1}^{L} (\Gamma_{p} x_{p}- \log\Gamma_{p})-\frac{1}{N}\log\det\left(\sigma^2\qI_N + \sum_{p=1}^{L} \frac{1}{\Gamma_{p}}\qR_{p} \right)  \right\} \notag \\
& \qquad + \log\sigma^2 -\sum_{p=1}^{L} (1+\log x_{p} ) + \calO(1/N)
\notag \\
&= \Extr_{\{\Gamma_{p}\}} \left\{ \sum_{p=1}^{L} (\Gamma_{p} x_{p}- \log\Gamma_{p} x_{p}-1)-\frac{1}{N}\log\det\left(\qI_N + \sum_{p=1}^{L} \frac{1}{\sigma^2\Gamma_{p}}\qR_{p} \right)  \right\}  + \calO(1/N),
\end{align}
where the third equality is obtained by applying the saddle-point method.
\end{proof}

{\renewcommand{\baselinestretch}{1.1}

}

\end{document}